\pgfplotsset{compat=1.10}
\renewcommand{\paragraph}{%
	\@startsection{paragraph}{4}%
	{\z@}{1.75ex \@plus 1ex \@minus .2ex}{-0.7em}%
	{\normalfont\normalsize\bfseries}%
}
\newtheorem{theorem}{Theorem}
\newtheorem{claim}{Claim}
\newtheorem{corollary}{Corollary}
\newtheorem{lemma}{Lemma}
\newtheorem{proposition}{Proposition}
\newtheorem{obs}{Observation}
\DeclareMathOperator*{\argmax}{arg\,max}
\def \qed {\hfill \vrule height6pt width 6pt depth 0pt}
\title{\scshape Advisors with Hidden Motives\thanks{I am grateful for guidance from Debraj Ray, Erik Madsen and Ariel Rubinstein, and for long discussions with Joshua Weiss and Samuel Kapon. I also thank Margaret Meyer, David Pearce, Ludvig Sinander, Dilip Abreu, Nageeb Ali, Ricardo Alonso, Arjada Bardhi, Heski Bar-Isaac, Dhruva Bhaskar, Sylvain Chassang, Navin Kartik, Laurent Mathevet, Luis Rayo, Mauricio Ribeiro, Dezs\"{o} Szalay, Nikhil Vellodi and Basil Williams  for their helpful comments.}}
\author{Paula Onuchic \\ University of Oxford}
\date{\datestyle\today}
\begin{document}
\maketitle
\begin{abstract}
I study a model of advisors with hidden motives: a seller discloses information about an object's value to a potential buyer, who doesn't know the object's value or how profitable the object's sale is to the seller (the seller's motives). I characterize optimal disclosure rules, used by the seller to \emph{steer} sales from lower- to higher-profitability objects. I investigate the effects of a mandated transparency policy, which reveals the seller's motives to the buyer. I show that, by removing the seller's \emph{steering incentive}, transparency can dissuade the seller from disclosing information about the object's value, and from acquiring that information in the first place. This result refines our understanding of effective regulation in advice markets, and links it to the commitment protocol in the advisor-advisee relation.
\end{abstract}

\section{Introduction}

People frequently take advice from advisors with hidden motives: broker-dealers and other financial advisors counsel investors, but also receive sales commissions from financial product providers; digital influencers provide product reviews to their followers, but these are often sponsored content; doctors inform patients of the effectiveness of different drugs and procedures, but may be rewarded by pharmaceutical companies; magazines and newspapers selectively publish pieces of reporting that align with their editorial bias. In all the mentioned settings, information receivers understand that information providers may be biased, but are not fully aware of the extent of the conflict. For example, in the context of financial brokers, clients understand that brokers receive sales commissions from some product providers, but may not know the size of the commissions on each product. 

Such conflicts of interest between advisors and advisees receive much regulatory attention. In all these contexts, the most common recommendation, and often instituted regulation, is \emph{transparency}: an advisor should let an advisee know the real interests behind their recommendations.\footnote{Think of the SEC mandating that financial advisors disclose commissions they receive from financial product providers, or Instagram asking digital influencers to mark their posts as ``paid content'' when they are sponsored by brands. In section \ref{sec:apl}, I mention some empirical literature that considers the effects of transparency in the context of financial advice. Ershov and Mitchell (2022) measure the effects of disclosure regulations on Instagram.} The informal intuition behind this ubiquitous regulation is that, if a person knows the intention behind the information they receive, then they can take the advice with ``a grain of salt.'' The advisor, anticipating the advisees ``healthy skepticism,'' should then have the right incentives to provide informative advice.

In this paper, I scrutinize that intuition, and investigate whether an advisee is always better off when their advisor's motives are transparent. To do so, I propose a model where a seller discloses information about an object to a buyer, who can choose to acquire the object. The seller is an advisor with \emph{hidden motives}, because the buyer does not know how interested they are in the object's sale. In this context, I show that a policy that informs the buyer about the seller's interests does not necessarily improve the informativeness of the seller's advice to the buyer. Indeed, in some contexts the transparency policy transparency decreases the seller's willingness to disclose information about the object's value; and it may also hinder the seller's incentives to acquire information about the object's value in the first place. The main results in this paper show that, in a communication protocol \emph{with commitment}, the effectiveness of transparency policies is linked to the curvature of the buyer's demand for the object. This novel observation is in contrast with results about the effectiveness of transparency policies under communication protocols \emph{without commitment}.\footnote{Communication protocols with commitment have been increasingly studied in the information design literature since the important contributions in Kamenica and Gentzkow (2011) and Rayo and Segal (2010); and are in contrast with previous literature which mainly studied cheap talk as in Crawford and Sobel (1982) and disclosure games as in Grosman (1981) and Milgrom (1981). The terms ``with commitment'' versus ``without commitment'' refer to environments in which the sender \emph{commits} to a communication strategy before observing the realization of the state of the world (the value of the object, in this paper's context) or strategically communicates after seeing the state realization.}

I introduce the model in more detail in section \ref{sec:mod}. The seller wishes to sell an object to a buyer who does not know the object's \emph{value} or how \emph{profitable} its sale is to the seller. The seller knows the object's profitability and sees some signal about its value. Prior to observing the object's profitability or the realization of the signal about its value, the seller commits to a rule to disclose signal realizations to the buyer. This rule assigns to each realization of the signal and each profitability a probability that the realization is disclosed to the buyer. 
The buyer is Bayesian and updates their belief about the object's value based on any information the seller reveals and on the seller's disclosure policy itself. The probability that they purchase the object is given by the buyer's ``demand function,'' which is increasing in the posterior expected value of the object.

When choosing what information to share with the buyer, the seller wants to disclose evidence that the object is valuable, so as to incentivize the buyer to purchase it. They also know that they can hide bad signal realizations, but if they do so,  the buyer becomes skeptical and reads non-disclosure itself as a negative sign, which lowers the probability of sale. In Theorem \ref{th:1}, I show that optimal disclosure rules follow a simple \emph{threshold structure}. Each signal realization about the object's value is classified as either ``good news,'' if above some endogenously determined threshold, or ``bad news.'' For each good news realization, there is a profitability threshold such that the realization is disclosed if the object's profitability is above the threshold, and concealed otherwise. Conversely, each piece of bad news is concealed if the object is sufficiently profitable, and revealed otherwise.  Using such a disclosure policy, the seller effectively \emph{steers} purchases from low-  to high-profitability objects.

Theorem \ref{th:1} also characterizes the seller's optimal profitability thresholds, and Corollaries \ref{prop1} and \ref{prop3} show that their shape depends on the curvature of the buyer's demand function. Next, I consider introducing a transparency policy, which institutes that the buyer observes the object's profitability to the seller. I show that, under transparency, the seller cannot use strategic disclosure in order to steer sales across objects with different profitability levels; and so the policy alters their incentives to disclose information about the object's value to the buyer. Interestingly, it is not necessarily true that this policy \emph{increases} the seller's incentives to inform the buyer. Indeed, Corollary \ref{cor:3} and Proposition \ref{pr:p3} show that if the buyer's demand function is strictly concave, then under mandated transparency the seller voluntarily discloses more evidence to the buyer when their motives are hidden. (The opposite conclusion holds when the buyer's demand function is strictly convex.) 

The intuition behind this result is that, when the seller's motives are hidden, some selective information disclosure allows them to steer sales from low-  to high-profitability objects. Importantly, in order to effectively steer sales, the seller \emph{must disclose some information}. Under the transparency policy, precisely because the buyer knows the seller's motives, the ``steering incentive'' for information disclosure vanishes, sometimes leading the seller to disclose strictly less information. Having considered the effects of transparency on the seller's incentives to disclose information, in section \ref{acq}, I augment the model with an ex-ante stage in which the seller invests in acquiring information about the object's value. I find that the ``steering incentive'' is also an important driver of the seller's decision to invest in information acquisition. 

In section \ref{sec:alt}, I consider the robustness of the results mentioned above with respect to changes in the communication protocol. 
One exercise considers a disclosure protocol with no commitment. I first show that, due to the sender's hidden motives, partially uninformative equilibria may not unravel, as in traditional voluntary disclosure games. The unravelling logic fails because, upon observing non-disclosure, the buyer may not know whether the seller has ``good news'' but negative profitability or ``bad news'' but positive profitability.\footnote{This result is in line with Seidmann and Winter's (1997) observation that in disclosure environments where the sender's preferred action depends on their type, there may exist equilibria in which their type is not always revealed in equilibrium.}
But if a transparency policy is introduced as before, which reveals to the buyer the object's profitability, then all partially uninformative equilibria unravel, and the unique equilibrium outcome is full disclosure. Therefore imposing transparency of the seller's motives unambiguously improves their willingness to share information with the buyer.

This last result is in contrast with the effects of transparency policies found under the ``disclosure with commitment'' protocol. This contrast highlights that the informal intuition behind transparency policies --- that if the buyer knows the intention behind the information they receive, then they take the seller's advice with ``the right degree of skepticism,'' and the seller has the right incentives to provide informative advice --- is anchored on an understanding of the advice market as a ``no commitment'' environment. While no commitment is a useful benchmark, there are reasons (such as reputation-building in repeated interactions) why a degree of ``commitment'' exists in the relationship between the advisor and the advisee. See  section \ref{sec:apl}, for example, where I discuss the main assumptions of the model in light of an application where the seller is a  financial advisor and the buyer is an investor. In that light, the current paper shows that if there is some degree of commitment that relationship, then transparency of the advisor's motives is not always a sufficient method to regulate the advice market.


\subsection{Related Literature}

This paper contributes to the literature studying the disclosure of hard evidence, started with Grossman (1981) and Milgrom (1981).\footnote{For a survey, see Milgrom (2008)} In particular, in my model the sender uses a disclosing technology as in Dye (1985) and Jung and Kwon (1988), whereby evidence realization is either entirely disclosed to the receiver or concealed. My model mainly departs from that literature by (i) considering a disclosure problem where the sender has hidden motives, and (ii) considering a protocol where the sender commits to a disclosure rule prior to the realization of the evidence. Regarding the commitment assumption, sections \ref{sec:alt} and \ref{sec:apl} discuss at length the implications of commitment on the communication outcome.

In assuming that the sender can commit ex-ante to a signaling technology, my paper relates to the large literature on information design, mainly stemming from Kamenica and Gentzkow (2011) and Rayo and Segal (2010).\footnote{For a survey, see Kamenica (2019)} Indeed, my model is a version of Rayo and Segal's (2010) model where communication happens through disclosure of hard evidence. Rayo and Segal (2010) do not describe their model as representing advisors \emph{with hidden motives}, and do not consider the effect of mandated transparency on the informativeness of the sender, which is the focus of my analysis.\footnote{The term “transparent motives” was coined by Lipnowski and Ravid (2020) to refer to a problem where the sender’s preferences are state-independent. In my model, under mandated transparency, the seller acts \emph{as if} they have transparent motives, in that sense.} Moreover, I contribute to the understanding of the benchmark model in Rayo and Segal (2010), by characterizing optimal disclosure for general receiver ``demand functions,'' while most of Rayo and Segal's (2010) characterization results apply to the linear specification.\footnote{With nonlinear demand functions, the optimal disclosure rule in my model sometimes ``pools ordered prospects'' -- in Rayo and Segal's (2010) language, ordered prospects are two objects whose values and profitabilities are ordered in the same way. For an example, see the optimal disclosure rule depicted in the right panel of Figure \ref{fig2}. One of Rayo and Segal's (2010) main characterization results is that, in the linear benchmark, optimal signals never pool ordered prospects.} Finally, my model also departs from this benchmark by considering costly information acquisition.

Within information design, my paper contributes to a growing literature studying \emph{constrained information design}. These are problems where the sender is subject to additional constraints, beyond Bayesian plausibility, when choosing a signal structure. For example, Mensch (2021) and Onuchic \textcircled{r} Ray (2022) consider monotonicity constraints. See Doval and Skreta (2021) and references therein for other examples. This paper considers a sender who is constrained to disclosure strategies. The characterization of optimal signals is a largely open problem in the information design literature and this paper provides such a characterization in an environment which is simplified by the assumption that the sender is constrained to a reasonable class of signals.

Finally, previous literature studies environments with endogenous information acquisition and voluntary disclosure, and argues that voluntary disclosure (as opposed to regulated mandated disclosure), as well as disagreement between sender and receiver, can incentivize information acquisition. Matthews and Postlewaite (1985) argue that by imposing that the seller disclose any evidence they may have about the quality of an object, a regulator can remove their incentives to test the object's quality in the first place. Kartik, Lee and Suen (2017)  show that an advisee may prefer to solicit advice from just one biased expert even when others (of equal or opposite bias) are available, because in the presence of more advisors, each individual expert free rides on the information acquired by the other experts.  In Che and Kartik (2009), though sender and receiver share the same preferences, they hold different priors over the distribution of states. Che and Kartik (2009) show that the sender may invest more in acquiring information when the disagreement between their priors is larger.\footnote{Shishkin (2022) and DeMarzo, Kremer, and Skrzypacz (2019) also study information acquisition by the sender in a Dye (1985) framework. Szalay (2005) and Ball and Gao (2021) study information acquisition by a biased agent in a delegation context.}

In contrast with this literature, my paper argues that not only does mandated transparency of the seller's motives affect their incentives to acquire evidence about the object's value, it also affects their decision to \emph{reveal their previously acquired} evidence to the buyer. This novel observation highlights the differences in the ``advice markets'' regulatory environment under communication protocols with commitment versus without commitment.

\section{Environment}\label{sec:mod}

A seller wishes to sell an object to a buyer who does not know their value for the object, or how profitable the object's sale is to the seller. The object's \emph{value} to the buyer, denoted $x\in\mathcal{X}=[x_{min},x_{max}]$, and its \emph{profitability} to the seller, denoted $y\in\mathcal{Y}=[y_{min},y_{max}]$, with $y_{min}\geqslant 0$, are drawn from some joint distribution commonly known by seller and buyer. We assume the joint distribution and its marginals have no mass points. Denote by $F_Y$ the cdf of the marginal profitability distribution.

Before the buyer decides whether to purchase the object, the seller can convey to them some information about the object's value (see below for details on the communication protocol). After observing any conveyed information, the buyer forms a Bayesian posterior belief about the object's value, with some expected value $\hat{x}$. Given the object's expected value $\hat{x}\in\mathcal{X}$, it is purchased by the buyer with probability $p(\hat{x})$, where $p:\mathcal{X}\rightarrow[0,1]$ is some strictly increasing and differentiable ``demand function.''\footnote{The ``demand function'' $p$ is a reduced-form representation of the buyer's behavior, as a function of their belief about the object's value. This approach assumes that (i) the buyer's purchase decision depends only on their belief about the object's expected value, and (ii) the probability of sale is increasing in the buyer's belief. 

The demand $p$ can be microfounded by many different models. For example, suppose the buyer has some outside option of privately known value and chooses between purchasing the object or keeping their outside option. Further, suppose that, from the seller's perspective, the outside option is distributed according to some continuous cdf $p$. The buyer purchases the object whenever their posterior mean $\hat{x}$ is larger than the outside option. From the seller's perspective, that happens with probability $p(\hat{x})$, which is thus a strictly increasing ``demand function.''} If the object is bought, the seller receives a payoff equal to the object's profitability, $y$. Otherwise, the seller's payoff is $0$. 

\vskip.3cm
\noindent\textbf{Information Disclosure Protocol.} The seller sees the object's profitability, $y$, as well as a signal of the value of the object to the buyer. Before observing the productivity and the signal realization, the seller \emph{commits} to a rule to  \emph{disclose} signal realizations to the buyer.

The signal, $\pi:\mathcal{X}\times\mathcal{Y}\rightarrow \Delta \mathcal{M}$, is a map between the object's value and profitability and a distribution of messages. Each signal realization $m\in\mathcal{M}$, if disclosed, induces on the buyer a Bayesian posterior belief about the object's value, whose expectation is the buyer's \emph{posterior mean} value $\hat{x}(m)=\mathbb{E}(x|m,\pi)$. We assume that the signal structure $\pi$ is such that, given a signal realization $m$, the buyer's posterior mean is \emph{independent} of the object's profitability: for all $y\in\mathcal{Y}$, $\mathbb{E}(x|m,\pi,y)=\mathbb{E}(x|m,\pi)=\hat{x}(m)$. 

Given this independence, we refer to a signal realization and to the posterior mean it induces interchangeably. In other words, if the buyer observes a message $m$, they interpret it as ``the object's expected value is $\hat{x}(m)$,'' independently of their belief about the object's profitability. Further, under this independence assumption, we can equate the signal $\pi$ with its induced distribution of posterior means. This distribution, call its cdf $F_{X|y}$, may still vary with the object's profitability $y$ (despite the independence assumption), because the underlying value of the object and its profitability may be correlated.

Each of the realizations can be disclosed by the seller to the buyer, and this choice may depend on the profitability of the object. A \emph{disclosure rule} is a measurable map from profitabilities and signal realizations into disclosure probabilities: $d:\mathcal{Y}\times \mathcal{X} \rightarrow [0,1]$. When a signal realization $\hat{x}$ is disclosed, the receiver's posterior mean after observing it is exactly $\hat{x}$. If otherwise a signal realization is not disclosed, the receiver's posterior mean is computed using Bayes' Rule, accounting for the disclosure rule. Formally, if $\int_\mathcal{Y}\int_\mathcal{X}\left[1-d(x,y)\right]dF_{X|y}(x)dF_Y(y)\neq0$,\footnote{Otherwise, non-disclosure is ``off-path,'' and we fix $x^{ND}$ at some value in $\mathcal{X}$.}

\begin{equation}
x^{ND}=\frac{\int_\mathcal{Y}\int_\mathcal{X}x\left(1-d(x,y)\right)dF_{X|y}(x)dF_Y(y)}{\int_\mathcal{Y}\int_\mathcal{X}\left(1-d(x,y)\right)dF_{X|y}(x)dF_Y(y)},
\label{avg}
\end{equation}
which is the expected value conditional on non-disclosure.  Define $y^{ND}$ analogously to be the average object profitability given that a realization is not disclosed.

\vskip.3cm
\noindent\textbf{Hidden Motives and Informativeness.} In the environment described above, the buyer's decision to purchase the object depends only on the object's \emph{value} -- the buyer does not inherently care about the \emph{profitability} of the sale to the seller. However, the buyer understands that the the object's sale may be more or less profitable, and that the seller may be more or less motivated to share information about the object's value depending on the object's profitability. 

Because the buyer does not observe the seller's profitability, we say that the seller is an advisor \emph{with hidden motives}. This paper tackles the question of whether the buyer would be better off if they were able to observe the advisor's motives. To that end, we at times consider an alternative benchmark \emph{with transparency}, where the buyer observes the signal realizations disclosed by the seller, as well as the object's profitability.

\section{Optimal Disclosure}\label{disclosure}
Suppose the seller commits to a disclosure rule $d$. The probability that the object is sold, conditional on its profitability being $y$ is  
\begin{align}
P(y,d)=\int_\mathcal{X}\left[d(x,y)p(x)+(1-d(x,y))p(x^{ND})\right]dF_{X|y}(x).\label{eq:p1}
\end{align}
To understand (\ref{eq:p1}), first remember that $F_y$ is the distribution of signal realizations given that the object has profitability $y$. Suppose a signal realization $x$ is disclosed, which happens with probability $d(x,y)$. Then the object is sold with probability $p(x)$, which is reflected in the first term inside the integral of (\ref{eq:p1}). As for the second term, with probability $1-d(x,y)$ the realization $x$ is not disclosed. In that case, the object is sold with probability $p(x^{ND})$, where $x^{ND}$ is as given in (\ref{avg}). 

The seller's expected payoff from committing to disclosure rule $d$ is then
\begin{align}
\Pi(d)=\mathbb{E}\left[yP(y,d)\right]=\mathbb{E}(y)\mathbb{E}\left[P(y,d)\right]+\text{Cov}\left[y,P(y,d)\right].\label{eq:p2}
\end{align}

In (\ref{eq:p2}), I split the seller's payoff into two terms, expressing that the seller's objective can be seen as twofold. According to the first term, the seller wishes to maximize the overall expected probability of sale, which is multiplied by the average profitability. Per the second term, they seek to maximize the covariance between the object's profitability and its probability of sale. 

Theorem \ref{th:1} provides a \emph{threshold characterization} of disclosure rules that maximize the seller's value. There is a threshold value $\bar{x}$ such that each signal realization is classified as either \emph{good news}, if $x$ is larger than $\bar{x}$, or \emph{bad news}, if $x<\bar{x}$. Additionally, for good news realization, there is a profitability threshold such that the realization is disclosed if and only if the object's profitability is above that threshold. Conversely, each bad news realization is disclosed if and only if the object's profitability is below some threshold.

\begin{theorem}
\label{th:1}
An optimal disclosure rule exists and every optimal rule $d^*$  has a threshold structure: There is a threshold value $\bar{x}\in\mathcal{X}$ and a threshold profitability $\bar{y}:\mathcal{X}\rightarrow \mathcal{Y}$ such that $d^*$ almost everywhere satisfies $d^*(x,y)\in\{0,1\}$ and
$$d^*(x,y)=1\Leftrightarrow (x-\bar{x})(y-\bar{y}(x))\geqslant 0.$$

\noindent The threshold value satisfies $\bar{x}=x^{ND}$, and, for $x\neq x^{ND}$, $\bar{y}(x)$ satisfies
\begin{align}
\bar{y}(x)=y^{ND}\left[\frac{p'(x^{ND})(x^{ND}-x)}{p(x^{ND})-p(x)}\right].\label{wth}
\end{align}
\end{theorem}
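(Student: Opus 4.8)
The plan is to cast the seller's problem as the maximization of $\Pi$ over the convex set $\mathcal{D}=\{d:\mathcal{Y}\times\mathcal{X}\to[0,1]\text{ measurable}\}$ and to read off the threshold structure from a first-order condition. Writing $\mu$ for the joint law with $d\mu=dF_{X|y}(x)\,dF_Y(y)$, I would first record existence: identifying $\mathcal{D}$ with the order interval $\{0\le d\le 1\}$ in $L^\infty(\mu)$, this set is weak-$*$ compact by Banach--Alaoglu, and $\Pi$ is weak-$*$ upper semicontinuous because the numerator $\int\!\int x(1-d)\,d\mu$ and denominator $N(d)=\int\!\int(1-d)\,d\mu$ defining $x^{ND}$ in (\ref{avg}) are continuous linear functionals of $d$, so $x^{ND}(d)$---and hence $p(x^{ND}(d))$---is continuous wherever $N(d)$ is bounded below; the full-disclosure boundary $N(d)=0$, where the non-disclosure contribution vanishes, is checked separately. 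A maximizer $d^{*}$ therefore exists.

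Since the theorem characterizes \emph{every} optimal rule, it suffices to establish necessary conditions, so the core step is to compute the G\^ateaux derivative of $\Pi$ at $d^{*}$ in an admissible direction $\eta$. The subtlety is that $x^{ND}(d)$ is itself a function of $d$, so the derivative is not the naive pointwise expression. Differentiating the ratio in (\ref{avg}) gives
\[
\left.\tfrac{d}{d\epsilon}\right|_{0}x^{ND}(d^{*}+\epsilon\eta)=-\tfrac{1}{N}\int_{\mathcal{Y}}\!\int_{\mathcal{X}}(x-x^{ND})\,\eta\,d\mu ,
\]
and, using the identity $\int\!\int y(1-d^{*})\,d\mu=y^{ND}N$, the derivative of $\Pi$ collapses to
\[
\left.\tfrac{d}{d\epsilon}\right|_{0}\Pi(d^{*}+\epsilon\eta)=\int_{\mathcal{Y}}\!\int_{\mathcal{X}}\eta(y,x)\,\Phi(y,x)\,d\mu ,
\]
with the pointwise multiplier
\[
\Phi(y,x)=y\big(p(x)-p(x^{ND})\big)-p'(x^{ND})\,y^{ND}\,(x-x^{ND}).
\]
The extra term $-p'(x^{ND})y^{ND}(x-x^{ND})$, coming entirely from the endogenous movement of the non-disclosure belief, is what I expect to be the crux: dropping it would yield the naive ``disclose all good news, hide all bad news'' rule and miss the steering threshold.

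Admissibility of $\eta$ (non-negative where $d^{*}=0$, non-positive where $d^{*}=1$, free where $d^{*}\in(0,1)$) then forces the bang-bang conclusion $d^{*}(y,x)=1$ where $\Phi>0$ and $d^{*}(y,x)=0$ where $\Phi<0$. To extract the stated geometry I would factor the multiplier as
\[
\Phi(y,x)=\big(p(x)-p(x^{ND})\big)\big(y-\bar y(x)\big),\qquad \bar y(x)=y^{ND}\,\frac{p'(x^{ND})(x^{ND}-x)}{p(x^{ND})-p(x)} ,
\]
which reproduces (\ref{wth}). Because $p$ is strictly increasing, $\operatorname{sign}(p(x)-p(x^{ND}))=\operatorname{sign}(x-x^{ND})$, so $\operatorname{sign}\Phi=\operatorname{sign}\big((x-x^{ND})(y-\bar y(x))\big)$; setting $\bar x=x^{ND}$ this is exactly $d^{*}(y,x)=1\Leftrightarrow(x-\bar x)(y-\bar y(x))\ge 0$, and it identifies the value threshold as $\bar x=x^{ND}$, the point where the disclosure logic switches from ``disclose when profitable'' (good news) to ``conceal when profitable'' (bad news). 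Finally, the indifference locus $\{\Phi=0\}$ is the union of the vertical line $\{x=x^{ND}\}$ and the graph $\{y=\bar y(x)\}$; under the maintained non-atomicity (absolute continuity) of the joint law this set is $\mu$-null, which delivers $d^{*}\in\{0,1\}$ $\mu$-almost everywhere and completes the characterization.
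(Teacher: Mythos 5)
Your proposal is correct, and its core --- the characterization of every optimal rule --- is essentially the paper's own argument: the G\^ateaux derivative you compute, with multiplier $\Phi(y,x)=y\bigl(p(x)-p(x^{ND})\bigr)-p'(x^{ND})y^{ND}(x-x^{ND})$, is exactly the paper's equation for $\partial\Pi/\partial d(y,x)$ in Lemma \ref{lem:2}, and your factorization $\Phi=(p(x)-p(x^{ND}))(y-\bar y(x))$ is a cleaner way of packaging the sign analysis the paper does case by case for $x\lessgtr x^{ND}$. Where you genuinely diverge is on existence: the paper discretizes the domain into $\mathcal{D}^N$ and $\mathcal{T}^N$, shows the suprema coincide at each $N$, passes to the limit, and then exploits that threshold rules are parametrized by two numbers in a compact set; you instead invoke weak-$*$ compactness of the order interval in $L^\infty(\mu)$ and continuity of $\Pi$ as a smooth function of finitely many bounded linear functionals. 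Your route is shorter and arguably tighter (the paper's interchange $\lim_N\sup_{\mathcal{D}^N}\Pi=\sup_{\mathcal{D}}\Pi$ is asserted rather than proved), at the cost of the functional-analytic machinery; the paper's route is more elementary and doubles as a computational reduction of the problem to two parameters. You also omit the paper's main-text ``rearrangement'' improvement argument (replacing $d$ by a threshold rule $\hat d$ disclosing each $x$ with the same total probability), but that argument is expository and is subsumed by the first-order condition. Two small points you share with the paper rather than resolve: the derivative computation presupposes $N(d^*)>0$, so the full-disclosure corner needs the separate (trivial) treatment you only gesture at; and non-atomicity of the joint law and its marginals kills $\{x=x^{ND}\}$ but does not by itself make the graph $\{y=\bar y(x)\}$ $\mu$-null (your parenthetical equating non-atomicity with absolute continuity is the stronger hypothesis actually needed there) --- the paper's Fact 2 makes the same elision.
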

I will now argue that any disclosure rule that does not satisfy the described threshold structure described can be improved upon, but the complete proof of Theorem \ref{th:1} is in the Appendix. 
Start with a disclosure rule $d$ which does not satisfy the threshold structure, and let $x^{ND}$ be its implied non-disclosure posterior mean. Now define an alternative rule, $\hat{d}$, that discloses each realization $x$ with the same probability as $d$, but has a threshold structure. That is, if $x\leqslant x^{ND}$, let
\begin{align}
\label{p1}
\hat{d}(x,y)=\begin{cases}
1\text{, if }y\leqslant\hat{y}(x)\\
0\text{, if }y>\hat{y}(x)
\end{cases}
\end{align}
\label{p2}
and if $x> x^{ND}$, let
\begin{align}
\hat{d}(x,y)=\begin{cases}
0\text{, if }y<\hat{y}(x)\\
1\text{, if }y\geqslant\hat{y}(x)
\end{cases}
\end{align}
where the thresholds $\hat{y}$ are calibrated such that, for each realization $x$,
\begin{align}
\label{eq:ddhat}
\int_\mathcal{Y}\hat{d}(x,y)dF_{Y|x}(y)=\int_\mathcal{Y}d(x,y)dF_{Y|x}(y),
\end{align}
where $F_{Y|x}$ is the profitability distribution conditional on a signal realization $x$. Condition (\ref{eq:ddhat}) implies that $d$ and $\hat{d}$ induce the same $x^{ND}$, and so $\hat{d}$ satisfies all the requirements of the threshold structure in Theorem \ref{th:1}. Figure \ref{fig:th1A} shows a disclosure rule $d$ that does not have a threshold structure and an improvement $\hat{d}$. 

By moving from $d$ to $\hat{d}$, the seller shifts the disclosure probability of bad news to low profitability objects and of good news to high profitability objects, while maintaining the distribution of posterior mean values that is induced on the buyer. It is easy to see (and I argue formally in Appendix \hyperref[appA]{A}), that: 1. $d$ and $\hat{d}$ produce the same overall probability of sale, because the distribution of posterior mean values is unchanged; and 2. $\hat{d}$ induces a strictly larger covariance between sales and profitability than $d$, because the change increases the probability that very profitable objects are sold, and decreases that probability for less profitable objects. These facts imply that $\hat{d}$ yields a strictly larger expected seller payoff than $d$, proving that the optimal rule must have a threshold structure.

\begin{figure}[t!]
\begin{minipage}{.47\textwidth}
\hskip-.3cm\begin{tikzpicture}[scale=.8]
    \begin{axis}[thick,smooth,no markers,
        xmin=0,xmax=1,
        ymin=0,ymax=1,
        x label style={at={(axis description cs:.6,-.15)},anchor=north},
    y label style={at={(axis description cs:-.05,.6)},anchor=south},
        xlabel=Signal Realization $x$,
        ylabel=Profitability $y$,
        xtick={.4},
      xticklabels={$x^{ND}$},
        ytick=\empty,
        yticklabels={}
]
 	\addplot[name path=A, no marks,very thin, black, -] expression[domain=0:.4,samples=200]{1};
	\addplot[name path=B, no marks,very thin, black, -] expression[domain=.4:.8,samples=200]{0};
	\addplot[name path=E, no marks,very thin, black, -] expression[domain=.8:1,samples=200]{1};
	\addplot[name path=C, no marks,very thin, red!50, -] expression[domain=0:.4,samples=200]{.4+2*x*(x-.4)};
	\addplot[name path=D, no marks,very thin, red!30, -] expression[domain=.4:.8,samples=200]{.4-2*(x-.4)*(x-1)};
	\addplot[name path=F, no marks,very thin, red!30, -] expression[domain=.8:1,samples=200]{.6+2*(x-.4)*(x-1)};

        \addplot[red!30] fill between[of=A and C];
        \addplot[red!30] fill between[of=B and D];
        \addplot[red!30] fill between[of=E and F];
    \end{axis}
   \draw[ dashed](2.75,0)--(2.75,5.7);
    \end{tikzpicture}
\end{minipage}
\begin{minipage}{.47\textwidth}
\hskip.5cm\begin{tikzpicture}[scale=.8]
    \begin{axis}[thick,smooth,no markers,
        xmin=0,xmax=1,
        ymin=0,ymax=1,
        x label style={at={(axis description cs:.6,-.15)},anchor=north},
    y label style={at={(axis description cs:-.05,.6)},anchor=south},
        xlabel=Signal Realization $x$,
        ylabel=Profitability $y$,
        xtick={.4},
      xticklabels={$x^{ND}$},
        ytick=\empty,
        yticklabels={}
]
 	\addplot[name path=A, no marks,very thin, black, -] expression[domain=0:.4,samples=200]{1};
	\addplot[name path=B, no marks,very thin, black, -] expression[domain=.4:.8,samples=200]{0};
	\addplot[name path=E, no marks,very thin, black, -] expression[domain=.8:1,samples=200]{0};
	\addplot[name path=C, no marks,very thin, red!50, -] expression[domain=0:.4,samples=200]{.4+2*x*(x-.4)};
	\addplot[name path=D, no marks,very thin, red!30, -] expression[domain=.4:.8,samples=200]{.4-2*(x-.4)*(x-1)};
	\addplot[name path=F, no marks,very thin, red!50, -] expression[domain=.8:1,samples=200]{.4-2*(x-.4)*(x-1)};

        \addplot[red!30] fill between[of=A and C];
        \addplot[red!30] fill between[of=B and D];
        \addplot[red!50] fill between[of=E and F];
    \end{axis}
   \draw[ dashed](2.75,0)--(2.75,5.7);
    \end{tikzpicture}    \end{minipage}
\caption{\small{Each panel depicts a disclosure rule, with the colored areas representing zones of no disclosure and the white areas representing zones of disclosure. The disclosure rule depicted on the right panel has a threshold structure, and is an improvement over the disclosure rule in the left panel, which does not have a threshold structure.}}
\label{fig:th1A}
\end{figure}
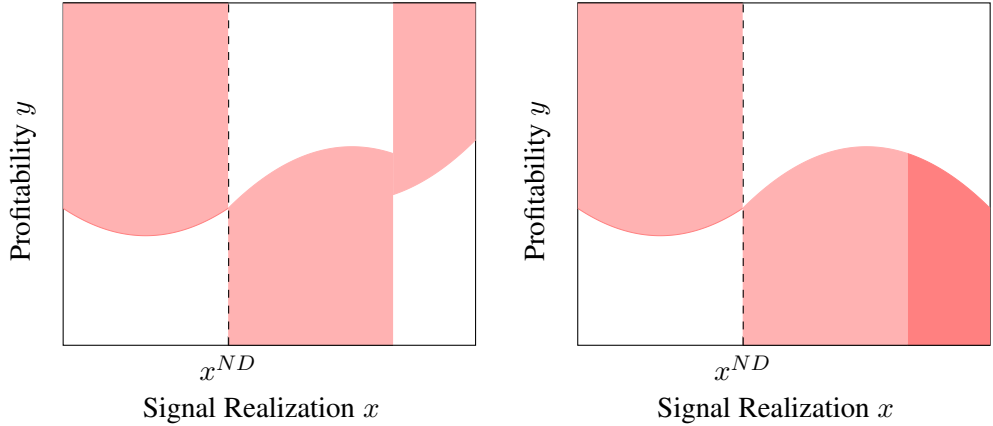

Computationally, Theorem \ref{th:1} achieves a significant simplification of the seller's problem. All disclosure rules with the described threshold structure are almost everywhere defined by two numbers: $x^{ND}$ and $y^{ND}$. As such, the seller's problem can be approached as a two-dimensional maximization problem.
We see from equation (\ref{wth}) that the distribution of profitability and signal realizations -- and the correlation between the object's value and its profitability -- affects the shape of the optimal thresholds only through $x^{ND}$ and $y^{ND}$. Further, some features of the optimal profitability threshold are fully defined by $p$, the buyer's ``demand function'' -- these features are further explored in Propositions (\ref{prop1}) and (\ref{prop3}) below. 

\vskip.3cm 
\noindent\textbf{Linear Demand.} Corollary \ref{prop1} applies Theorem \ref{th:1} when $p$ is an affine function.

\begin{corollary}[to Theorem \ref{th:1}]
If the demand function $p$ is affine, then all optimal disclosure rules have a threshold structure, with $\bar{x}=x^{ND}\in\text{int}\left(\mathcal{X}\right)$ and $\bar{y}(x)=y^{ND}\in\text{int}\left(\mathcal{Y}\right)$ for all $x\in\mathcal{X}$.\footnote{The fact that the thresholds must be interior does not follow directly from Theorem \ref{th:1}. We provide a proof of that claim in the Appendix.}
\label{prop1}
\end{corollary}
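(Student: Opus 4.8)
The form of the thresholds follows directly from Theorem \ref{th:1}. Writing $p(x)=a+bx$ with $b>0$ (so that $p$ is strictly increasing), I would substitute into the bracket in \eqref{wth}: since $p'(x^{ND})=b$ and $p(x^{ND})-p(x)=b\,(x^{ND}-x)$, the bracket equals $1$ identically, so $\bar y(x)=y^{ND}$ for every $x\neq x^{ND}$. Together with $\bar x=x^{ND}$ from Theorem \ref{th:1}, this already delivers the constant-threshold structure. The real content of the corollary, flagged in the footnote, is the interiority of $x^{ND}$ and $y^{ND}$, and this is where the work lies.

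The plan is to first reduce the objective. With affine $p$, Bayes-plausibility (the average posterior mean equals the prior mean) implies that the unconditional probability of sale $\mathbb E[P(y,d)]=a+b\,\mathbb E(x)$ is independent of $d$; hence in \eqref{eq:p2} the first term is constant and the seller's problem collapses to maximizing the single term $\mathrm{Cov}[y,P(y,d)]$. I would obtain this by integrating \eqref{eq:p1} against $F_Y$ and using the definition \eqref{avg} of $x^{ND}$ to cancel the disclosed and concealed contributions of the linear part.

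The interiority argument then has two ingredients. The first is an elementary averaging observation: whenever the non-disclosure region has positive measure, $x^{ND}$ is a non-degenerate conditional average of the concealed value realizations and $y^{ND}$ of the concealed profitabilities, so the absence of mass points forces $x^{ND}\in\mathrm{int}(\mathcal X)$ and $y^{ND}\in\mathrm{int}(\mathcal Y)$. This immediately rules out any boundary threshold for which concealment is nontrivial: for instance $\bar x=x_{max}$ would make every realization ``bad news,'' so the concealed pool would have mean strictly below $x_{max}$, contradicting the consistency condition $\bar x=x^{ND}$; the cases $\bar x=x_{min}$ and $\bar y\in\{y_{min},y_{max}\}$ are handled symmetrically using $\bar x=x^{ND}$ and $\bar y=y^{ND}$.

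The second, and main, ingredient is to dispose of the two genuinely degenerate rules---full disclosure and no disclosure---for which the non-disclosure pool is null and the averaging observation does not bite. Here I would use the reduction to covariance: starting from full disclosure, introduce a threshold rule that conceals ``bad news'' for high-profitability objects and ``good news'' for low-profitability objects. Because concealed bad news is then sold at $p(x^{ND})>p(x)$ and concealed good news at $p(x^{ND})<p(x)$, this raises the effective conditional expected value for high $y$ and lowers it for low $y$ while preserving its mean, so $\mathrm{Cov}[y,P(y,d)]$ strictly increases; the same comparison against the constant sale probability shows steering strictly beats no disclosure. Hence neither degenerate rule is optimal, the optimal non-disclosure region has positive measure, and the averaging observation yields interiority. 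The main obstacle is precisely this last step: it is a non-marginal comparison in which the endogenous value $x^{ND}$ shifts as concealment is introduced, and establishing a \emph{strict} gain requires the no-mass-point assumption together with strict monotonicity of $p$ to guarantee that the reshuffled conditional value is strictly ordered in $y$ on a set of positive measure.
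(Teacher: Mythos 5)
Your proposal is correct and follows essentially the same route as the paper's appendix proof: the affine form of $p$ collapses the bracket in \eqref{wth} to $1$, Bayes-plausibility makes the overall sale probability independent of $d$ so that only the covariance term matters, interiority follows because $x^{ND}$ and $y^{ND}$ are conditional averages of atomless distributions whenever concealment has positive probability, and full disclosure is ruled out by a strict covariance comparison with an interior threshold rule (relying on full support). Your explicit handling of the no-disclosure rule is a minor completeness addition that the paper leaves implicit.
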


\begin{figure}[t!]
\begin{minipage}{.45\textwidth}
\hskip-.5cm\begin{tikzpicture}[scale=.7]
    \begin{axis}[thick,smooth,no markers,
        xmin=0,xmax=1,
        ymin=0,ymax=1,
        xtick={.4},
      xticklabels={$\bar{x}$},
        ytick={.3},
        yticklabels={$\bar{y}$},
        title style={at={(.8,-.4)},anchor=north,yshift=-1}
]
 	\addplot[name path=A, no marks,very thin, black, -] expression[domain=0:.4,samples=200]{1};
	\addplot[name path=B, no marks,very thin, black, -] expression[domain=.4:1,samples=200]{0};
	\addplot[name path=C, no marks,very thin, black, -] expression[domain=0:.4,samples=200]{.3};
	\addplot[name path=D, no marks,very thin, black, -] expression[domain=.4:1,samples=200]{.3};

        \addplot[gray] fill between[of=A and C];
        \addplot[gray] fill between[of=B and D];
    \end{axis}
    \draw[very thick](2.75,0)--(2.75,5.7);
    \draw[very thick](0,1.7)--(6.85,1.7);
    \end{tikzpicture}
\end{minipage}
\begin{minipage}{.45\textwidth}
\hskip-.8cm\begin{tikzpicture}[scale=.7]
    \begin{axis}[thick,smooth,no markers,
        xmin=0,xmax=1,
        ymin=0,ymax=1,
        xtick={.4},
      xticklabels={$\bar{x}$},
        ytick={.1,.3,.7},
        yticklabels={Low Profitability, $\bar{y}$, High Profitability},
        title style={at={(.8,-.4)},anchor=north,yshift=-1},
]
 	\addplot[name path=A, no marks,very thin, black, -] expression[domain=0:.4,samples=200]{1};
	\addplot[name path=B, no marks,very thin, black, -] expression[domain=.4:1,samples=200]{0};
	\addplot[name path=C, no marks,very thin, black, -] expression[domain=0:.4,samples=200]{.3};
	\addplot[name path=D, no marks,very thin, black, -] expression[domain=.4:1,samples=200]{.3};

        \addplot[gray] fill between[of=A and C];
        \addplot[gray] fill between[of=B and D];
    \end{axis}
    \draw[very thick](2.75,0)--(2.75,5.7);
    \draw[very thick](0,1.7)--(6.85,1.7);
    \draw[very thick, red](0,4)--(2.75,4);
    \node at (1.3,4.4) {$d^*=0$};
    \draw[very thick, blue](2.75,4)--(6.85,4);
    \node at (4.8,4.4) {$d^*=1$};
        \draw[very thick, blue](0,.54)--(2.75,.54);
    \node at (1.3,.94) {$d^*=1$};
    \draw[very thick, red](2.75,.54)--(6.85,.54);
    \node at (4.8,.94) {$d^*=0$};
    \end{tikzpicture}
    \end{minipage}
\caption{\small{Optimal disclosure rule when $p$ is \textbf{affine}. The gray areas represent signal realizations that are optimally concealed by the seller, and white areas are optimally disclosed. }}
\label{fig1}
\end{figure}
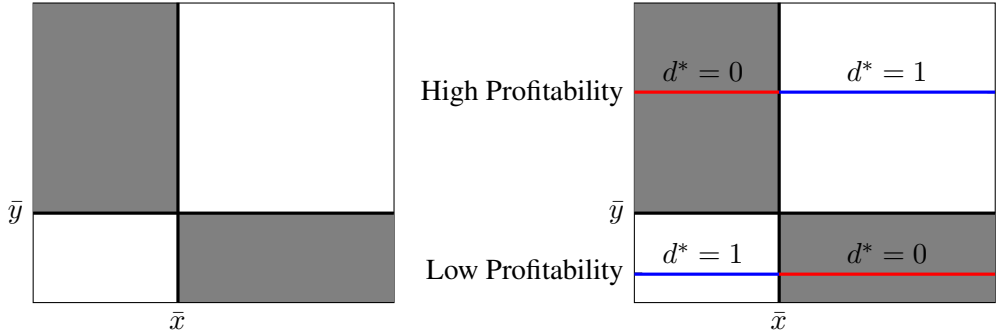

When the demand function $p$ faced by the seller is affine, as an implication of the martingale property of beliefs, all disclosure functions yield the same overall probability of sale. That is, for any two disclosure functions $d$ and $d'$, $\mathbb{E}\left[P(y,d)\right]=\mathbb{E}\left[P(y,d')\right]$. 
This means that, in choosing a disclosure policy, there is no scope for the seller to increase or decrease the overall probability that the buyer purchases the object. Rather, the seller solely distributes this constant sale probability between high and low profitability objects. 

In order to optimally \emph{steer sales}  from low- to high-profitability objects, the seller uses a disclosure rule that assigns objects to two classes, based on their profitability: high profitability, with $y>y^{ND}$, and low profitability, with $y<y^{ND}$. One such disclosure rule is depicted in Figure \ref{fig1}. Given this rule, the buyer's posterior mean for low profitability objects is \emph{at best} equal to $x^{ND}$, which is the posterior mean induced when the signal realizations above $x^{ND}$ are not disclosed. Conversely, the buyer's induced posterior mean for high profitability is \emph{at least} equal to $x^{ND}$, which is induced when signal realizations below $x^{ND}$ are not disclosed.

\vskip.3cm 
\noindent\textbf{Nonlinear Demand.} 
If the demand function $p$ is not affine, then the \emph{amount of information} disclosed by the seller  impacts the overall probability of sale. Observation \ref{obs0} below states that increasing the amount of information about the object's value that is disclosed to the buyer \emph{increases} the overall probability that the object is sold if the demand function is convex, and decreases it if the demand function is concave. To formally state this result, we say that a disclosure rule $d$ has \emph{more disclosure} than a disclosure rule $d'$ if $d(x,y)\geqslant d'(x,y)$ for all $x\in\mathcal{X}$ and $y\in\mathcal{Y}$; and strictly so if this inequality is strict for a subset of $\mathcal{X}\times\mathcal{Y}$ with positive measure.
\begin{obs} 
Suppose disclosure rule $d$ has (strictly) more disclosure than disclosure rule $d'$. Then (i) if $p$ is strictly convex, $d$ yields a (strictly) higher probability of sale than $d'$; and (ii) if $p$ is strictly concave, $d$ yields a (strictly) lower probability of sale than $d'$.
\label{obs0}
\end{obs}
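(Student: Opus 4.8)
The plan is to reduce the claim to a single standard fact: a mean-preserving spread raises the integral of any strictly convex function and lowers the integral of any strictly concave one. The only real work is to recognize that the ``overall probability of sale'' is exactly the integral of $p$ against the buyer's induced distribution of posterior means, $F^B(\cdot,d)$, after which the comparison is immediate.

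First I would rewrite $\mathbb{E}[P(y,d)]$ as an expectation against $F^B(\cdot,d)$. Expanding $P(y,d)$ from (\ref{eq:p1}) and integrating over $y$, the disclosed realizations contribute $\int_\mathcal{Y}\int_\mathcal{X} d(y,x)p(x)\,dF_{X|y}(x)\,dF_Y(y)$, while every non-disclosed realization is read by the buyer as the single posterior mean $x^{ND}$ and so contributes $p(x^{ND})$ times the total non-disclosure mass $\int_\mathcal{Y}\int_\mathcal{X}(1-d(y,x))\,dF_{X|y}(x)\,dF_Y(y)$. Comparing these two terms with the explicit construction of $F^B(\cdot,d)$ given in the footnote to the definition of informativeness — disclosed realizations placed at their own values, plus a point mass at $x^{ND}$ carrying the non-disclosure probability — one sees that they are precisely $\int_\mathcal{X} p(x)\,dF^B(x,d)$. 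Hence $\mathbb{E}[P(y,d)]=\int_\mathcal{X} p(x)\,dF^B(x,d)$, and likewise for $d'$.

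Next I would invoke the definition of informativeness from the environment: ``$d$ strictly more informative than $d'$'' means $F^B(\cdot,d)$ is a nondegenerate mean-preserving spread of $F^B(\cdot,d')$. Equivalently, there is a posterior-mean random variable $X\sim F^B(\cdot,d')$ and noise $\varepsilon$ with $\mathbb{E}[\varepsilon\mid X]=0$, not almost surely zero, such that $X+\varepsilon\sim F^B(\cdot,d)$. Applying the conditional Jensen inequality to $p$ then gives $\mathbb{E}[p(X+\varepsilon)]\geqslant\mathbb{E}[p(X)]$ when $p$ is convex and the reverse when $p$ is concave, i.e. $\int p\,dF^B(\cdot,d)\geqslant\int p\,dF^B(\cdot,d')$ in the convex case. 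Taking $p$ strictly convex yields part (i) and $p$ strictly concave yields part (ii).

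The one point requiring care — the main obstacle — is upgrading the weak inequality to a strict one. Conditional Jensen gives strict inequality pointwise, $\mathbb{E}[p(X+\varepsilon)\mid X]>p(X)$, exactly on the event where $\varepsilon$ is nondegenerate given $X$ and $p$ is strictly convex. Because $d$ is assumed \emph{strictly} more informative, $F^B(\cdot,d)\neq F^B(\cdot,d')$, so this event has positive probability; integrating the strict pointwise inequality over it delivers $\int p\,dF^B(\cdot,d)>\int p\,dF^B(\cdot,d')$, and symmetrically the strict reverse inequality in the concave case.
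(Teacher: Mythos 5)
Your proof is correct and follows essentially the same route the paper relies on: the paper states this Observation without a separate proof, treating it as the standard fact that a mean-preserving spread of the induced posterior-mean distribution $F^B(\cdot,d)$ raises (lowers) the integral of a strictly convex (concave) $p$, which is exactly your Jensen argument. Your added care in identifying $\mathbb{E}[P(y,d)]$ with $\int_\mathcal{X}p\,dF^B(x,d)$ and in upgrading the inequality to a strict one via the nondegeneracy of the conditional noise is a correct and complete filling-in of the details the paper leaves implicit.
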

With a nonlinear demand function, the seller sales from low- to high-profitability objects at the expense of the total probability that the object is sold. If the seller faces a convex demand function, he wishes to disclose more information in order to maximize the sale probability. Conversely, if facing a concave demand, overall sale probability increases when information is concealed from the buyer. 

\begin{corollary}[to Theorem \ref{th:1}]If $p$ is strictly convex (concave), any optimal disclosure rule has a threshold structure with a strictly decreasing (increasing) profitability threshold function $\bar{y}(x)$, satisfying $\bar{y}(x^{ND})=y^{ND}$. 
\label{prop3}
\end{corollary}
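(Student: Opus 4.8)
The plan is to read everything off the explicit threshold formula (\ref{wth}) supplied by Theorem \ref{th:1}: once the threshold structure and the formula for $\bar{y}$ are granted, the corollary is purely a statement about the shape of the one-variable function $\bar{y}$, so no new optimization is needed. Abbreviate $a:=x^{ND}$ and introduce the secant slope of $p$ anchored at $a$,
$$s(x):=\frac{p(x)-p(a)}{x-a},\qquad x\neq a.$$
Since $p$ is strictly increasing, $p'(a)>0$ and $s(x)>0$, and (\ref{wth}) rewrites as
$$\bar{y}(x)=y^{ND}\,\frac{p'(a)(a-x)}{p(a)-p(x)}=y^{ND}\,\frac{p'(a)}{s(x)}.$$
Thus, up to the (constant) positive factor $y^{ND}p'(a)$, the threshold $\bar{y}$ is the reciprocal of the secant slope, and the whole problem reduces to the monotonicity of $x\mapsto s(x)$.

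First I would dispose of the boundary condition $\bar{y}(x^{ND})=y^{ND}$. Formula (\ref{wth}) is a $0/0$ expression at $x=a$, but $s(x)\to p'(a)$ as $x\to a$ by the definition of the derivative, so $\bar{y}(x)=y^{ND}p'(a)/s(x)\to y^{ND}$; extending $\bar{y}$ continuously to $a$ gives $\bar{y}(a)=y^{ND}$, as claimed. This also shows $\bar{y}$ is continuous across the anchor point, so monotonicity established on each side of $a$ will in fact hold globally across $\mathcal{X}$.

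The one substantive step is the monotonicity of the secant slope. Since $p$ is differentiable,
$$s'(x)=\frac{p'(x)(x-a)-\big(p(x)-p(a)\big)}{(x-a)^2},$$
whose numerator equals $p(a)-p(x)-p'(x)(a-x)$, i.e. the amount by which $p(a)$ exceeds the value at $a$ of the tangent line to $p$ at $x$. If $p$ is strictly convex this tangent defect is strictly positive for every $x\neq a$, so $s'>0$ and $s$ is strictly increasing; consequently $\bar{y}=y^{ND}p'(a)/s$ is strictly decreasing, provided $y^{ND}>0$. The strictly concave case is identical with all inequalities reversed: $s$ is strictly decreasing and $\bar{y}$ strictly increasing. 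The point to watch — more a bookkeeping matter than a genuine obstacle — is the sign of $y^{ND}$, on which the \emph{direction} of monotonicity hinges; I would record $y^{ND}>0$ explicitly (which holds, e.g., when profitabilities are positive) before concluding. I would also verify the sign bookkeeping in passing from (\ref{wth}) to the reciprocal-secant form, since $a-x$ and $p(a)-p(x)$ both change sign across $a$ while their ratio stays positive.
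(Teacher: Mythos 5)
Your proof is correct and matches the paper's (implicit) argument: the paper offers no separate proof of this corollary, treating it as an immediate read-off from formula (\ref{wth}) of the monotonicity of the secant slope $x\mapsto \bigl(p(x)-p(x^{ND})\bigr)/(x-x^{ND})$, which is exactly the computation you carry out. Your caveat about the sign of $y^{ND}$ is well taken---the paper assumes only $\mathbb{E}(y)>0$ and never verifies that $y^{ND}>0$ at the optimum, so the stated direction of monotonicity does implicitly rely on that sign condition, which you are right to flag.
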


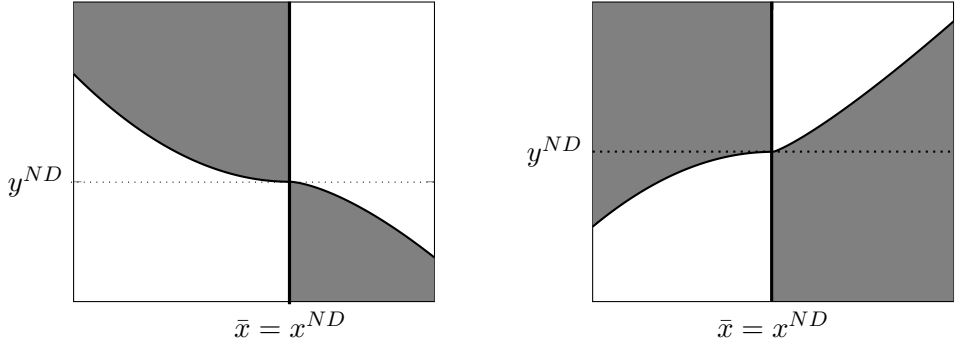
\begin{figure}[t!]
\begin{minipage}{.50\textwidth}
\hskip-.5cm\begin{tikzpicture}[scale=.7]
    \begin{axis}[thick,smooth,no markers,
        xmin=0,xmax=1,
        ymin=0,ymax=1,
        xtick={.6},
        xticklabels={$\bar{x}=x^{ND}$},
        ytick={.4},
        yticklabels={$y^{ND}$},
        title style={at={(.8,-.4)},anchor=north,yshift=-1},
]
 	\addplot[name path=A2, no marks,very thin, black, -] expression[domain=0:.6,samples=200]{1};
	\addplot[name path=B2, no marks,very thin, black, -] expression[domain=.6:1,samples=200]{0};
	\addplot[name path=C2, no marks,thick, black, -] expression[domain=0:.6,samples=200]{(x-.6)^2+.4};
	\addplot[name path=D2, no marks,thick, black, -] expression[domain=.6:1,samples=200]{-(x-.6)^(1.5)+.4};

        \addplot[gray] fill between[of=A2 and C2];
        \addplot[gray] fill between[of=B2 and D2];
    \end{axis}
    \draw[very thick](4.1,-.05)--(4.1,5.7);
    \draw[dotted, thin](-.05,2.27)--(6.85,2.27);
    \end{tikzpicture}
\end{minipage}
\begin{minipage}{.50\textwidth}
\begin{tikzpicture}[scale=.7]
    \begin{axis}[thick,smooth,no markers,
        xmin=0,xmax=1,
        ymin=0,ymax=1,
        xtick={.5},
      xticklabels={$\bar{x}=x^{ND}$},
        ytick={.5},
        yticklabels={$y^{ND}$},
        title style={at={(.8,-.4)},anchor=north,yshift=-1}
]
 	\addplot[name path=A3, no marks,very thin, black, -] expression[domain=.5:1,samples=200]{0};
	\addplot[name path=B3, no marks,very thin, black, -] expression[domain=0:.5,samples=200]{1};
	\addplot[name path=C3, no marks,thick, black, -] expression[domain=.5:1,samples=200]{(x-.5)^1.2+.5};
	\addplot[name path=D3, no marks,thick, black, -] expression[domain=0:.5,samples=200]{-(x-.5)^2+.5};

        \addplot[gray] fill between[of=A3 and C3];
        \addplot[gray] fill between[of=B3 and D3];
    \end{axis}
    \draw[very thick](3.4,0)--(3.4,5.7);
    \draw[dotted, thick](0,2.85)--(6.85,2.85);
    \end{tikzpicture}
    \end{minipage}
\caption{\small{Optimal disclosure rule when $p$ is \textbf{strictly convex} (left panel) and \textbf{strictly concave} (right panel). The gray areas represent signal realizations that are optimally concealed by the seller, and the white areas are optimally disclosed.}}
\label{fig2}
\end{figure}

An optimal disclosure rule when $p$ is strictly convex is represented in the left panel of Figure \ref{fig2}. 
A striking difference when comparing to the linear demand case is that some ``very bad news'' may be optimally disclosed even when the object is highly profitable; and, conversely, some ``very good news'' are disclosed to the receiver even when the object has low profitability. This feature is derived from the tension between the desire to maximize the probability of sale (which pushes towards disclosure) and to covary the probability of sales with profitability (which requires some concealment). 
In the opposite case, where $p$ is strictly concave, due to an analogous tension, the sender optimally conceals some ``very bad news'' about low-profitability objects and some ``very good news'' about high-profitability objects. This is depicted in the right panel of  Figure \ref{fig2}.

Corollary \ref{prop3} shows that the ``shape'' of optimal disclosure rules depends on the curvature of the demand function $p$. In the current model, the demand function is taken as a primitive, and its curvature does not necessarily reflect some economically interpretable property. However, in a related model, Hwang, Kim and Boleslavsky (2019) study an oligopolistic market where agents compare the expected value of objects offered by different firms, and purchase the highest-value object. In that model, the ``demand function'' for a particular firm's object is given by the distribution of the maximum value among the objects offered by that firm's competitors. Hwang, Kim and Boleslavsky (2019) show that, as the number of competitors increases, that demand function becomes ``more convex.'' From that perspective, the convexity of the demand function is related to the competitiveness of the market.

\subsection{Is Transparency Good?}

Let's consider the following policy intervention, which makes the motives of the seller transparent to the buyer: When the buyer observes the seller's ``advice'' -- either the disclosed signal realization or non-disclosure -- they also perfectly observe how profitable the object's sale is to the seller. This should be thought of as a transparency mandate imposed on the seller by some regulator, forcing them to always reveal their interest in the object's sale.

Naturally, anticipating that the buyer will see the profitability of the object, the seller optimally uses disclosure strategy which differs from the payoff maximizing disclosure rule without transparency. Indeed, with transparency, the seller's problem becomes continuum of \emph{separate} disclosure problems, one for each profitability. 
\begin{lemma}
With mandated transparency, the seller chooses for each $y\in\mathcal{Y}$ a disclosure policy $d(\cdot,y):\mathcal{X}\rightarrow [0,1]$ to maximize
$P(y,d(\cdot,y))$, where
$$P(y,d(\cdot,y))=\int_\mathcal{X}\left[d(x,y)p(x)+(1-d(x,y))p(x_y^{ND})\right]dF_{X|y}(x)$$ 
\begin{equation}
\label{eqq1}\text{and }x_y^{ND}=\frac{\int_\mathcal{X}[1-d(x,y)] xdF_{X|y}(x)}{\int_\mathcal{X}[1-d(x,y)] dF_{X|y}(x)}.
\end{equation}
\label{lem:1}
\end{lemma}
Under transparency, the seller cannot use strategic disclosure in order to steer sales across objects with different profitability levels. Rather, they separately maximize the probability that the object with each profitability level gets sold. Proposition \ref{propp1} below describes the seller's optimal disclosure strategy under transparency. 

\begin{proposition}
Consider a mandated transparency environmentand  hidden motives environment. Let $d$ and $d'$ be optimal disclosure policies to the seller in these respective environments. 

\begin{enumerate}
\item If $p(\cdot)$ is strictly \textbf{convex}, mandated transparency improves evidence disclosure, that is, $d$ has \textbf{more disclosure} than $d'$.

\item If $p(\cdot)$ is strictly \textbf{concave}, mandated transparency harms evidence disclosure, that is, $d$ has \textbf{less disclosure} than $d'$.

\end{enumerate}
\label{propp1}
\end{proposition}

Proposition \ref{propp1} delineates how the mandated transparency policy affects de seller's willingness to voluntarily disclose evidence about the object's value to the buyer. In short, if the demand function is convex, then the mandated transparency policy increases the set of evidence pieces that the seller would voluntarily choose to disclose to the buyer. In fact, under that demand regime, any optimal disclosure policy under mandated transparency involves the voluntary disclosure of \emph{all evidence} -- so that the optimal $d$ involves full disclosure, regardless of the object's profitability. Conversely, if the demand function is concave, then the mandated transparency has the opposite effect on optimal evidence disclosure, and the optimal disclosure policy is for the seller to not disclose any of the realized evidence.

To evaluate whether transparency is good or bad in terms of the informativeness of the seller's advice to the buyer, it is not enough to consider the evidence that is voluntarily disclosed by the seller. Rather, we must also consider the information that is ``involuntarily'' conveyed to the buyer directly when they observe the object's profitability. Remember that objects with different profitabilities may have different value (and evidence) distributions, and therefore some information is conveyed directly through the mandated transparency of the seller's motives. To evaluate the overall impact of the policy on the informativeness of the seller's advice to the buyer, we consider its impact on the distribution of the buyer's beliefs induced by the policy and the seller's optimal disclosure policy. Given a disclosure rule $d$, let $F^B(\cdot,d)$ be the distribution of posterior means observed by the buyer.

Denote the transparency policy by $\tau\in\{0,1\}$, with $\tau=1$ indicating the mandated transparency environment, and $\tau=0$ the hidden motives environment. Given a disclosure rule $d$ and transparency policy $\tau$, let $F^B(\cdot,d,\tau)$ be the distribution of posterior means observed by the buyer.\footnote{For a given $d$ and $\tau$, $F^B(x,d,\tau)$ is equal to\begin{align*}\int_\mathcal{Y}\int_{[x_{min},x)}d(\hat{x},y)dF_{X|y}(\hat{x})dF_Y(y)+\int_\mathcal{Y}\int_\mathcal{X}(1-d(\hat{x},y))\mathbbm{1}\big\{x_y^{ND}\leqslant x\big\}dF_{X|y}(\hat{x})dF_Y(y),\end{align*}
where for every $y\in\mathcal{Y}$, $x_y^{ND}=x^{ND}$ as given in (\ref{avg}) if $\tau=0$; and $x_y^{ND}$ is as defined  in (\ref{eqq1}) if $\tau=1$.}
We say the buyer is \emph{better informed} under the pair $(d,\tau)$ than under the pair $(d',\tau')$ if $F^B(\cdot,d,\tau)$ is more informative than $F^B(\cdot,d',\tau')$ in the Blackwell order. Equivalently, if $F^B(\cdot,d,\tau)$ is a mean preserving spread of $F^B(\cdot,d',\tau')$. By evaluating the policy in terms of its implied Blackwell informativeness to the buyer, I am agnostic as to what the buyer's objective is, and therefore on what the is ``surplus'' that the policy-maker wishes to maximize. As we know from previous literature, the buyer's value under any decision problem (for which the mean posterior is a sufficient statistic) is higher under the more Blackwell informative policy. 

Corollary \ref{cor:3} below follows from Proposition \ref{propp1} and states that the mandated transparency policy is beneficial to the buyer whenever the demand function is convex; but may be harmful when the demand function is concave. This latter conclusion holds whenever no information about the expected value of the object is directly conveyed by the disclosure of the object's profitability, that is, if  $\mathbb{E}[x|y] = \mathbb{E}[x]$ for every $y\in\mathcal{Y}$. 

\begin{corollary}[to Proposition \ref{propp1}]
Consider a mandated transparency environment and  hidden motives environment.

\begin{enumerate}
\item If $p(\cdot)$ is strictly \textbf{convex}, the buyer is better informed under mandated transparency.

\item If $p(\cdot)$ is strictly \textbf{concave}, and $\mathbb{E}[x|y] = \mathbb{E}[x]$ for every $y\in\mathcal{Y}$, then the buyer is  better informed in the hidden motives environment.
\end{enumerate}
\label{cor:3}
\end{corollary}

Proposition \ref{pr:p3} below compares  optimal disclosure rules when the seller's motives are hidden versus made transparent with some probability. The timing is as follows: First, the seller commits to a disclosure rule, understanding that the buyer will observe the object's profitability with some probability $\tau\in[0,1]$. Next, the buyer observes the disclosed signal and, with probability $\tau$, also the object's profitability. Finally, the buyer makes their purchase decision. 

When mandated transparency is probabilistic, then optimal disclosure rules with/without transparency are not always ranked in terms of the disclosure order used in Proposition \ref{propp1}, as that order is incomplete (requiring increased disclosure for every possible evidence realization). However, Proposition \ref{pr:p3} shows that, when the buyer's non-disclosure posteriors (about the object's value and profitability) are the same with and without transparency, then the same ranking as in Corollary \ref{cor:3} holds. That is, transparency improves disclosure if the demand function $p$ is convex, and hinders it if $p$ is concave.
\begin{proposition}
Consider two mandated transparency environment, in which the buyer sees the object's profitability with probability $\tau$ and $\tau'$ respectively, with $1\geqslant\tau\geqslant\tau'\geqslant0$.  Let $d$ and $d'$ be optimal disclosure policies to the seller in the respective environments. 

\begin{enumerate}
\item If $p(\cdot)$ is strictly \textbf{convex}, mandated transparency does not harm evidence disclosure, that is, $d$ has \textbf{no less disclosure} than $d'$. 

If moreover $\tau'=0$ and $x^{ND}(d)=x^{ND}(d')$ and $y^{ND}(d)=y^{ND}(d')$, then $d$ has \textbf{more disclosure} than $d'$.

\item If $p(\cdot)$ is strictly \textbf{concave}, mandated transparency does not improve evidence disclosure, that is, $d$ has \textbf{no more disclosure} than $d'$. 

If moreover $\tau'=0$ and $x^{ND}(d)=x^{ND}(d')$ and $y^{ND}(d)=y^{ND}(d')$, then $d$ has \textbf{less disclosure} than $d'$.

\end{enumerate}
\label{pr:p3}
\end{proposition}

If the demand function is affine, then under mandated transparency, the seller is indifferent between all disclosure rules. In particular, full transparency is one optimal policy. This result is, again, an implication of the martingale property of beliefs. Interestingly, if there is even a small probability that the seller's motives will not be transparent to the buyer, then the seller optimally discloses \emph{as if} his motives are fully hidden (with the disclosure rule described in Corollary \ref{prop1}). This failure of ``continuity'' in the set of seller's optimal disclosure policies is registered in Proposition \ref{propp2} below.

\begin{proposition}
If the demand function $p$ is affine, and the buyer observes the seller's motives with probability $\tau\in[0,1]$, then

\begin{enumerate}
\item If $\tau=1$, the seller is indifferent between all disclosure rules.

\item If $\tau\neq1$, the set of optimal disclosure rules is invariant to $\tau$ and every 
optimal disclosure rule satisfies the conditions in Corollary \ref{prop1}.
\end{enumerate}
\label{propp2}
\end{proposition}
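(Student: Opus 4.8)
The plan is to write the seller's expected payoff under the mixed protocol as a convex combination of the payoffs in the two pure benchmarks---full transparency ($\tau=1$) and fully hidden motives ($\tau=0$)---and then exploit the martingale property of beliefs, which under an affine $p$ forces the transparent component to be constant across all disclosure rules.

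First I would set up the objective. Since the event that the buyer observes $y$ is independent of $(x,y)$ and of the seller's disclosure decision, the seller commits to a single rule $d(y,x)$ that is applied in both contingencies, and the buyer's inference differs only through the non-disclosure posterior. In the transparent contingency the buyer conditions on $y$, so non-disclosure induces the conditional mean $x_y^{ND}$ of Lemma \ref{lem:1}; in the hidden contingency the buyer cannot condition on $y$, and---because the transparency event is independent of the primitives---non-disclosure induces the same $x^{ND}$ as in the fully hidden benchmark of equation (\ref{avg}), independent of $\tau$. Writing $P^T(y,d)$ and $P^H(y,d)$ for the two conditional sale probabilities, the seller's payoff becomes $\Pi_\tau(d)=\tau\,\mathbb{E}\!\left[yP^T(y,d)\right]+(1-\tau)\,\mathbb{E}\!\left[yP^H(y,d)\right]$.

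Next I would establish part (a). Fix $y$ and write $p(x)=a+bx$. Using the definition of $x_y^{ND}$ to cancel the non-disclosure term---exactly the affine cancellation underlying Corollary \ref{prop1}---I would show that $P^T(y,d)=a+b\,\mathbb{E}(x\mid y)$, which does not depend on $d$. Hence $\mathbb{E}\!\left[yP^T(y,d)\right]$ is a constant, and when $\tau=1$ the seller's payoff is independent of the disclosure rule, giving indifference. For part (b), I would observe that since this transparent component is constant in $d$ and $1-\tau>0$ whenever $\tau\neq 1$, maximizing $\Pi_\tau$ is equivalent to maximizing $\mathbb{E}\!\left[yP^H(y,d)\right]=\Pi(d)$, the fully hidden-motives objective of equation (\ref{eq:p2}). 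Its maximizers are characterized in Corollary \ref{prop1}, and because this reduced problem carries no dependence on $\tau$, the optimal rule is invariant to $\tau$ and satisfies the conditions stated there.

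The main obstacle is conceptual rather than computational: pinning down the information structure of the mixed protocol---verifying that the independence of the transparency event lets the payoff split additively into the two benchmark payoffs, and in particular that the hidden-contingency non-disclosure posterior coincides with the $\tau$-independent $x^{ND}$ rather than some $\tau$-weighted object. Once that decomposition is justified, both claims follow from the martingale property with only the brief affine-cancellation calculation, and the discontinuity at $\tau=1$ (indifference there versus the rigid hidden-motives rule for all $\tau<1$) emerges as an immediate consequence.
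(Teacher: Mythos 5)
Your proposal is correct and follows essentially the same route as the paper: decompose the payoff as $\tau\Pi^1(d)+(1-\tau)\Pi^0(d)$, use the affine/martingale cancellation to show $\Pi^1$ is constant in $d$ (giving part (a)), and conclude that for $\tau\neq 1$ the seller just maximizes the hidden-motives objective characterized in Corollary \ref{prop1} (giving part (b)). Your explicit verification that the non-disclosure posterior in the hidden contingency is the $\tau$-independent $x^{ND}$ is a point the paper leaves implicit, but it is the same argument.
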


A direct reading of the results in this section is that they delineate conditions under which a policy maker should or should not institute a transparency policy. But beyond this normative implication, these results also highlight that in information design models, the `alignment between sender and receiver preferences' and the `opaqueness of the sender's motives' are distinct objects; and it is not necessarily true that regulations that reduce the latter would also make the sender's interests more aligned with those of the receiver. And even in contexts where it does improve the alignment between the advisor and advisee's interests --- such as when the demand function is convex, in the model --- it does so because, as a byproduct of the increased transparency of their motives, the sender's effective objective function becomes ``more convex,'' therefore inducing them to optimally disclose more evidence. This is in contrast with the intuition in  a disclosure environment without commitment --- discussed in Section \ref{sec:alt} --- in which the transparency of the seller's motives induces the unravelling of uninformative equilibria.

The comparative statics results above --- with respect to the transparency $\tau$ of the seller's motives --- relate to a recent paper by Curello and Sinander (2022), which studies a general information design environment and establishes a relation between increasing the convexity of the sender's objective and increasing the informativeness of the sender's optimal signal. Although I leave a formalization for future work, intuitively when the demand function $p$ is strictly convex, the regulator can ``increase the convexity'' of the seller's effective objective by increasing the transparency of their motives (and therefore reduce their steering ability). This observation hints at the possibility that effective regulation of advice markets (in communication environments ``with commitment'') should directly target increasing the convexity of advisor's objectives, a goal which is only occasionally achieved through transparency policies.

\section{Hidden Motives and Information Acquisition}\label{acq} 
In the baseline model, the signal about the object's value that is observed by the seller is taken as given. I now study how the seller's hidden motives affect their incentive to acquire information about the object's value in the first place, knowing that they can selectively disclose the acquired information to the buyer. To study this problem, we focus on the linear demand case.
\vskip.3cm
\textbf{[Assumption]} For the remainder of the section, we assume that $p$ is affine.

\subsection{Signal Acquisition Technology}

Prior to the object's value and profitability realizing, the seller commits to two actions: they \emph{acquire a costly signal} of the object's value and (as before) choose a rule to disclose that signal. The signal acquisition is \emph{overt}, so that the buyer knows the signal acquired by the seller, even if the signal realization itself is not disclosed.

Signals are indexed by their precision $\theta\in\mathbb{R}_+$, and are acquired at a cost $c(\theta)$, where $c$ is strictly increasing. The set of signals available to the seller is such that 
\vskip.3cm
\noindent \textbf{(i)} For every $\theta$ and every $y\in\mathcal{Y}$, $F_{X|y}$ has full support over some interval $\mathcal{X}(\theta)$.\footnote{Along with assumption (ii), this implies that $\theta'>\theta$ $\Rightarrow$ $\mathcal{X}(\theta)\subseteq\mathcal{X}(\theta')$.}
\vskip.3cm
\noindent \textbf{(ii)} More precise signals are more informative than less precise ones: $\theta'> \theta$ implies that, for every $y\in\mathcal{Y}$, $F_{X|y}(\cdot,\theta')$ is a mean-preserving spread of $F_{X|y}(\cdot,\theta)$. 
\vskip.3cm
\noindent \textbf{(iii)} The signal with precision $\theta=0$ is perfectly uninformative: for every $y\in\mathcal{Y}$, $F_{X|y}(\cdot,0)$ is a degenerate distribution.

\subsection{Transparency and Information Acquisition}
We want to consider how mandated transparency of the seller's motives affect their decision to acquire information about the object's value. Consider the policy that discloses the object's profitability to the receiver with probability $\tau\in[0,1]$.

\begin{proposition}
If $\tau=1$, the seller optimally acquires the least precise signal, that is, $\theta=0$.
Moreover, take $\tau>\tau'$ and let $\theta$ and $\theta'$ be signal precisions optimally acquired under transparency policies $\tau$ and $\tau'$, respectively. Then $\theta\leqslant\theta'$.
\label{propn}
\end{proposition}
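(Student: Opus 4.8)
The plan is to collapse the two-stage problem into a one-dimensional tradeoff and then compare optima by revealed preference. First I would decompose the seller's gross (pre-cost) payoff. Fixing a disclosure rule $d$ and precision $\theta$, with probability $\tau$ the buyer observes $y$ and with probability $1-\tau$ they do not, so $\Pi(d,\theta,\tau)=\tau\,\Pi^{T}(d,\theta)+(1-\tau)\,\Pi^{H}(d,\theta)$, where $\Pi^{T}$ is the transparent objective of Lemma~\ref{lem:1} and $\Pi^{H}$ is the hidden-motives objective of~(\ref{eq:p2}). Because $p$ is affine, the martingale property of beliefs makes $\Pi^{T}$ independent of both $d$ and $\theta$: conditional on $y$ the average induced posterior mean equals $\mathbb{E}_{F_{X|y}}[x]$ whatever the disclosure rule (the logic behind Proposition~\ref{propp2}(a)), so the conditional sale probability is $p(\mathbb{E}_{F_{X|y}}[x])$, and this conditional mean is invariant to $\theta$ since more precise signals are mean-preserving spreads (assumption~(ii)). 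Denote this constant by $\Pi^{T}$. As $\Pi^{T}$ does not vary with $d$, for every $\tau<1$ the optimal rule maximizes $\Pi^{H}(\cdot,\theta)$, and the optimized gross payoff is $\Pi^{*}(\theta,\tau)=\tau\,\Pi^{T}+(1-\tau)\,\Pi^{H*}(\theta)$ with $\Pi^{H*}(\theta):=\max_{d}\Pi^{H}(d,\theta)$.

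The first claim is then immediate: at $\tau=1$ the net payoff is $\Pi^{T}-c(\theta)$, which is constant in $\theta$ apart from the strictly increasing cost, so the unique optimum is the least precise signal $\theta=0$.

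For the monotone comparative-statics claim I would argue by revealed preference on $g(\theta,\tau):=(1-\tau)\Pi^{H*}(\theta)-c(\theta)$. Optimality of $\theta$ under $\tau$ and of $\theta'$ under $\tau'$ yields $(1-\tau)[\Pi^{H*}(\theta)-\Pi^{H*}(\theta')]\ge c(\theta)-c(\theta')\ge(1-\tau')[\Pi^{H*}(\theta)-\Pi^{H*}(\theta')]$; subtracting the outer terms gives $(\tau'-\tau)[\Pi^{H*}(\theta)-\Pi^{H*}(\theta')]\ge0$, so with $\tau>\tau'$ we obtain $\Pi^{H*}(\theta)\le\Pi^{H*}(\theta')$. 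To convert this payoff ordering into the precision ordering $\theta\le\theta'$ I need $\Pi^{H*}$ to be strictly increasing in $\theta$; equivalently, since affinity also fixes the overall sale probability, I need the maximal steering covariance $\max_{d}\mathrm{Cov}[y,P(y,d)]$ to be strictly increasing in precision.

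Establishing that monotonicity is the crux of the argument. The intuition is that a more precise signal, being a mean-preserving spread of the less precise one (assumption~(ii)), gives the seller more dispersion to exploit: under the optimal threshold rule of Corollary~\ref{prop1}, objects with $y>y^{ND}$ pool their concealed bad news and objects with $y<y^{ND}$ pool their concealed good news, so a more dispersed $F_{X|y}$ lets the seller drive the induced posterior means farther apart across the profitability threshold, raising $\mathrm{Cov}[y,P(y,d)]$. The delicate point is that the endogenous non-disclosure mean $x^{ND}$, and hence the thresholds, shift with $\theta$; I would control this by writing the steering covariance, up to a $\theta$-independent constant, as a functional of the concealment profile $\int(1-d(y,x))(x-x^{ND})\,dF_{X|y}(x)$ and showing that the mean-preserving spread weakly enlarges the attainable set of such profiles, so the optimized value cannot fall (and rises strictly whenever the signal is genuinely more informative). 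With $\Pi^{H*}$ strictly increasing, $\Pi^{H*}(\theta)\le\Pi^{H*}(\theta')$ forces $\theta\le\theta'$; weak monotonicity suffices once ties are broken by selecting the relevant maximizer.
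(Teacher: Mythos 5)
Your architecture matches the paper's: the same decomposition $\Pi=\tau\Pi^{T}+(1-\tau)\Pi^{H}$, the same observation that affinity of $p$ makes $\Pi^{T}$ constant in both $d$ and $\theta$ (which settles the $\tau=1$ claim), and the same revealed-preference algebra yielding $\Pi^{H*}(\theta)\leqslant\Pi^{H*}(\theta')$. The gap is in the last step, and you have correctly located it but resolved it the wrong way. You assert that you need $\Pi^{H*}$ to be \emph{strictly} increasing in $\theta$, and your argument for that is only a sketch whose strict part (``rises strictly whenever the signal is genuinely more informative'') is exactly the delicate claim left unproved; your fallback remark that ``weak monotonicity suffices once ties are broken by selecting the relevant maximizer'' is not available either, since the proposition quantifies over arbitrary optimal selections of $(\theta,d)$, not a favorable tie-breaking rule.

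The fix --- and what the paper actually does --- is to extract strictness from the revealed-preference inequalities rather than from $\Pi^{H*}$. If $\theta\neq\theta'$ but $\Pi^{H*}(\theta)=\Pi^{H*}(\theta')$, your two displayed optimality inequalities collapse to $0\geqslant c(\theta)-c(\theta')\geqslant 0$, so $c(\theta)=c(\theta')$ and hence $\theta=\theta'$ because $c$ is strictly increasing --- a contradiction. Therefore either $\theta=\theta'$ or $\Pi^{H*}(\theta)<\Pi^{H*}(\theta')$ strictly, and in the latter case \emph{weak} monotonicity of $\Pi^{H*}$ in $\theta$ rules out $\theta>\theta'$ by contraposition. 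Weak monotonicity is precisely what your ``attainable-set'' observation delivers, and the paper proves it by a replication argument: for $\theta'>\theta$ and any rule $d$, the mean-preserving-spread property lets one construct $d'$ under $\theta'$ matching, for each $y$, both the disclosed probability mass and the disclosed conditional mean, hence achieving the same payoff. Had you combined that replication lemma with the strictness coming from the cost function, your proof would close without ever needing the strict monotonicity of $\Pi^{H*}$, which the paper never establishes and which may require further assumptions on the signal family.
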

Proposition \ref{propn} shows that transparency negatively affects the seller's incentives to acquire information about the object's value. Indeed, the seller always acquires a \emph{less precise signal} when faced with a policy that mandates \emph{more transparency}. Moreover, under full transparency, the seller has no incentives to acquire any information, and chooses the least precise (and least costly) signal available.

Remember that, when the seller's motives are hidden, they can strategically disclose information about the object's value to the buyer in order to steer their purchases from low- to high-profitability objects. But in order to selectively inform the buyer, the seller must have information about the object's value in the first place. Indeed, the more precise a signal the seller acquires, the more profitably they can steer the buyer's action. Now consider the policy that mandates transparency: when the seller's motives are (with some probability) made transparent to the buyer, their ability to steer sales between objects with different profitabilities is hindered. In turn, their incentives to acquire information about the object's value are also dampened, and they choose to invest in a less precise signal.

While Proposition \ref{propn} unequivocally establishes that transparency hinders information acquisition, it does not imply that the buyer is unequivocally better informed when the seller's motives are less transparent. Because the seller does not fully disclose the information they acquire, the comparison between the seller's informativeness under two different transparency policies is not always unambiguous (in the Blackwell sense). We evaluate transparency policies in terms of their induced informativeness, as before. The disclosure environment induced by transparency policy $\tau$ is $\delta=(\tau,\theta,d)$ where $\theta$ and $d$ are optimal evidence acquisition and disclosure policies to the seller, given transparency $\tau$. We denote by  $F^B(\cdot,\delta)$ the distribution of posterior means (about the object's value) that is induced on the buyer by the environment $\delta$. 

Proposition \ref{propnn} establishes some comparisons of the informativeness under different transparency regimes. We do so under the assumption that the object's profitability and the distribution of signals about its value are independently distributed. By doing so, we ensure that no information  about the object's value  is directly conveyed to the buyer through the revelation of the object's profitability. Proposition \ref{propnn} then first shows that transparency does not make the seller more Blackwell informative. (Note that, because the informativeness order is not complete, a not more informative signal is not necessarily less informative.) Next, it states that, under full transparency, the seller is Blackwell less informative than under partial or no transparency. 

Finally, I establish some conditions on the structure of signals available to the seller such that more transparency necessarily implies that the seller becomes Blackwell less informative. I say a signal of precision $\theta$ is \emph{regular} if there is a unique pair $(\hat{x},\hat{y})\in\text{int}\left(\mathcal{X}\times\mathcal{Y}\right)$ such that
$$\mathbb{E}_\theta\left[x|(x-\hat{x})(y-\hat{y})<0\right]=\hat{x}\text{ and }\mathbb{E}_\theta\left[y|(x-\hat{x})(y-\hat{y})<0\right]=\hat{y}.$$
Regularity guarantees that the seller has a unique optimal disclosure rule, given acquired precision $\theta$. Next, I say a signal of precision $\theta$ is \emph{symmetric} if, the object's profitability and the distribution of signals about its value are independently distributed for every $\theta$, and $F_X(\cdot;\theta)$ and $F_Y$ are both symmetric.

\begin{proposition}
\label{propnn}
Suppose  $F_{X|y}(\cdot;\theta)$ is independent of $y$. Let $\tau>\tau'$, and let $\delta=(\tau,\theta,d)$ and $\delta'=(\tau'\theta',d')$ be the disclosure environments induced by transparency policies $\tau$ and $\tau'$, respectively. Then:
\begin{enumerate}
\item $F^B(\cdot,\delta)$ is not strictly Blackwell more informative than $F^B(\cdot,\delta')$.
\item If $\tau=1$, then $F^B(\cdot,\delta')$ is Blackwell more informative than $F^B(\cdot,\delta)$.
\item If signals of precision $\theta$ and $\theta'$ are regular and symmetric, then $F^B(\cdot,\delta')$ is Blackwell more informative than $F^B(\cdot,\delta)$.
\end{enumerate}
\end{proposition}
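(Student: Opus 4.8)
The plan is to reduce all three claims to statements about the threshold rule of Corollary \ref{prop1} applied to signals ordered by precision, using that $p$ is affine. The two workhorses are Proposition \ref{propn} (which gives $\theta\leqslant\theta'$ whenever $\tau>\tau'$, and $\theta=0$ when $\tau=1$) and Proposition \ref{propp2}: since $p$ is affine and the disclosure subproblem at a fixed acquired precision is exactly the fixed-signal problem of Section \ref{disclosure}, for any $\tau<1$ the seller's optimal disclosure rule is the hidden-motives threshold rule, independent of $\tau$. Consequently, when both $\tau,\tau'<1$, the distributions $F^B(\cdot,\theta,d)$ and $F^B(\cdot,\theta',d')$ are generated by the \emph{same} rule applied to two underlying signals with $\theta\leqslant\theta'$, where by assumption (ii) $F_X(\cdot,\theta)$ is a mean-preserving contraction of $F_X(\cdot,\theta')$ (convex order is closed under mixing the conditional spreads over $F_Y$). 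I stress at the outset that, because $p$ is affine, the overall probability of sale is constant across rules, so the seller's optimality does \emph{not} by itself pin down the informativeness of $F^B$; the comparisons must therefore be obtained structurally rather than from the seller's payoff.

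Part 2 is immediate: by Proposition \ref{propn}, $\tau=1$ forces $\theta=0$, and by assumption (iii) the acquired signal is degenerate, so $F^B(\cdot,\theta,d)$ is a point mass at the prior mean, which is Blackwell-minimal and hence dominated by $F^B(\cdot,\theta',d')$. This also settles Part 1 when $\tau=1$, since a point mass is never a strict mean-preserving spread of anything.

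For Part 1 with $\tau<1$ I would argue by comparing the tops of the supports. If $\theta=\theta'$, Proposition \ref{propp2} gives $d=d'$, the two buyer distributions coincide, and neither is strictly more informative. If $\theta<\theta'$, assumptions (i)--(ii) make the support grow strictly, $\max\mathcal{X}(\theta)<\max\mathcal{X}(\theta')$. Every posterior mean the buyer can hold under precision $\theta$ — a disclosed realization or the non-disclosure mean $x^{ND}$ — lies in $\mathcal{X}(\theta)$, so the essential supremum of $F^B(\cdot,\theta,d)$ is at most $\max\mathcal{X}(\theta)$. By contrast, the threshold rule at $\theta'$ discloses good-news realizations arbitrarily close to $\max\mathcal{X}(\theta')$ with positive probability, because the profitability threshold is interior (Corollary \ref{prop1}) and $y$ has full support; hence the essential supremum of $F^B(\cdot,\theta',d')$ equals $\max\mathcal{X}(\theta')$. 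Since $A\succeq_{cx}B$ forces $\operatorname{ess\,sup}A\geqslant\operatorname{ess\,sup}B$ (test against the convex $\phi(x)=\max\{x-m,0\}$ for $m$ strictly between the two suprema), $F^B(\cdot,\theta,d)$ cannot be a mean-preserving spread of $F^B(\cdot,\theta',d')$ — which is exactly ``not strictly Blackwell more informative.''

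For Part 3 I would show that symmetry pins the optimal threshold at the center, $\bar{x}=x^{ND}=\mu_x$ and $\bar{y}=\mu_y$, with regularity guaranteeing this is the unique optimum, so that $F^B$ is an explicit ``dilution'' of the acquired signal toward $\mu_x$: realization $x$ is disclosed with a probability $q(x)$ symmetric about $\mu_x$ and otherwise pools to $\mu_x$. In the canonical symmetric case where profitability is independent of the value signal, $q\equiv\tfrac12$ and $F^B(\cdot,\theta,d)=\tfrac12\delta_{\mu_x}+\tfrac12 F_X(\cdot,\theta)$; because this mixing kernel does not depend on $\theta$, the convex order $F_X(\cdot,\theta)\preceq_{cx}F_X(\cdot,\theta')$ passes through the common dilution to yield $F^B(\cdot,\theta,d)\preceq_{cx}F^B(\cdot,\theta',d')$. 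The main obstacle is extending this coupling beyond the independent example to the general symmetric case, where $q(x)$ and the pooled region may themselves vary with $\theta$; there one must verify that symmetry keeps the pooled mean at $\mu_x$ and the disclosure ``balanced,'' and then construct a single $\theta$-independent garbling that carries the convex order from $F_X(\cdot,\theta')$ down to $F^B(\cdot,\theta,d)$. A secondary technical point to nail down is the rigorous claim in Part 1 that the finer threshold rule genuinely attains $\max\mathcal{X}(\theta')$ in the support of $F^B$, which rests on the interiority of the thresholds and the full support of $F_Y$.
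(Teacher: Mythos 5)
Your Part 2 is correct and coincides with the paper's argument. The problems are in Parts 1 and 3.

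In Part 1 your key step is that $\theta<\theta'$ forces $\max\mathcal{X}(\theta)<\max\mathcal{X}(\theta')$, so that the two buyer distributions have distinct essential suprema. That strict inequality does not follow from assumptions (i)--(ii): a mean-preserving spread need not enlarge the support (both $F_{X|y}(\cdot,\theta)$ and $F_{X|y}(\cdot,\theta')$ can have full support on the \emph{same} interval, with the more precise one merely pushing mass toward the endpoints), and the paper's footnote records only the weak inclusion $\mathcal{X}(\theta)\subseteq\mathcal{X}(\theta')$. When the supports coincide, your test function $\phi(x)=\max\{x-m,0\}$ gives nothing. A second, smaller gap: for $\theta=\theta'$ you invoke Proposition \ref{propp2} to conclude $d=d'$, but that proposition gives invariance of the optimal rule in $\tau$, not uniqueness of the optimum; absent regularity, two distinct optimal threshold rules with different $(\bar{x},\bar{y})$ could be selected under the two policies. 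The paper avoids both issues by writing $F^B$ explicitly in terms of the thresholds and splitting into cases on whether $(\bar{x},\bar{y})=(\bar{x}',\bar{y}')$: when the thresholds coincide, the conditional mean-preserving spreads pass through the common threshold structure, so $F^B(\cdot,\theta',d')$ dominates and a fortiori $F^B(\cdot,\theta,d)$ is not strictly more informative; when they differ, a direct computation of $\int_{x_{min}}^{x}F^B(\hat{x},\cdot)\,d\hat{x}$ near one end of the support shows $F^B(\cdot,\theta,d)$ fails to be a mean-preserving spread of $F^B(\cdot,\theta',d')$. Some such case analysis is needed in place of the support comparison.

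In Part 3 you establish the claim only when profitability is independent of the value signal (so that $q\equiv 1/2$) and you explicitly flag the general symmetric case as an unresolved obstacle. The paper closes it without any coupling construction: symmetry of $F_{X|y}$ and of $F_{Y|x}$ implies $\mathbb{E}_\theta\left[x\,|\,\text{ND}\right]=\mathbb{E}(x)$ and $\mathbb{E}_\theta\left[y\,|\,\text{ND}\right]=\mathbb{E}(y)$ directly (a short computation exchanging quadrants of the non-disclosure region), and regularity then makes $(\mathbb{E}(x),\mathbb{E}(y))$ the \emph{unique} optimal threshold pair for both $\theta$ and $\theta'$. The two rules therefore share the same thresholds, and the common-threshold case of the Part 1 argument delivers the Blackwell ranking. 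The general-case step you are missing is exactly that observation, which does not require independence.
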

We can also evaluate the seller's informativeness according to some completion of the Blackwell order, call it $\succsim$.\footnote{A completion of the Blackwell order $\succsim$ is a complete order over $\Delta\mathcal{X}$ such that $G,G'\in\Delta\mathcal{X}$ and $G$ is Blackwell more informative than $G'$ implies that $G\succsim G'$.} Say that the set of signals available to the seller is such that \emph{precision is beneficial} if endowing the seller with a more precise signal weakly improves the seller's optimally disclosed signal, according to the order $\succsim$. 

Formally, fix some transparency level $\tau\in[0,1)$, and suppose that, for every $\theta$, the signal of precision $\theta$ is regular -- so that for each $\theta$, there is a unique optimal disclosure rule $d_\theta$. Also note (per Proposition \ref{propp2}) that such optimal disclosure rule is independent of $\tau$, so long as $\tau\neq1$. We say that \emph{precision is beneficial} if $\theta\geqslant\theta'$ implies that $F^B(\cdot,\tau,\theta,d_\theta)\succsim F^B(\cdot,\tau,\theta',d_{\theta'})$ -- or, equivalently, if $F^B(\cdot,\tau,\theta,d_\theta)$ is $\succsim$-more-informative than $F^B(\cdot,\theta',d_{\theta'})$.

One possible such completion of the Blackwell order is the buyer's surplus. In that case, precision is beneficial if a buyer would prefer to be advised by a seller who has more precise information. Corollary \ref{prop5} below shows that, when precision is beneficial, transparency is always detrimental if evaluated in terms of $\succsim$-informativeness.

\begin{corollary}[to Proposition \ref{propn}]
\label{prop5}
Suppose  $F_{X|y}(\cdot;\theta)$ is independent of $y$, and assume that precision is beneficial. Let $\tau>\tau'$, and let $\delta=(\tau,\theta,d)$ and $\delta'=(\tau'\theta',d')$ be the disclosure environments induced by transparency policies $\tau$ and $\tau'$, respectively. Then $F^B(\cdot,\delta')$ is $\succsim$-more-informative than $F^B(\cdot,\delta)$.
\end{corollary}

\subsection{An Example} \label{sec:ex}
Suppose the object's value and profitability are independent and both are distributed according to $U[0,1]$.
Further,  the seller can acquire signals of precision $\theta\in[0,1]$ about the object's value, subject to a strictly increasing cost $c(\theta)$. 

Given the acquired signal, a piece of hard evidence about the object's value realizes, denoted $e\in[0,1]$. With probability $\theta$, $e$ is equal to the true value of the object, and with probability $(1-\theta)$, $e$ is a random number drawn from $U[0,1]$, uncorrelated with the object's true value. The mean posterior implied by a piece of evidence $e\in[0,1]$ is
$$x=\theta e+(1-\theta)\frac{1}{2}.$$
And so the distribution of mean posteriors implied by $\theta$ is $F_{X|y}(\cdot,\theta)=U\left[\frac{1-\theta}{2},\frac{1+\theta}{2}\right]$, for every $y$. It is easy to see that every signal $\theta$ is symmetric, and that $\theta'> \theta$ implies that, for every $y\in\mathcal{Y}$, $F_{X|y}(\cdot,\theta')$ is a mean-preserving spread of $F_{X|y}(\cdot,\theta)$. I also show in the Appendix that every signal $\theta$ is regular.

Because every signal $\theta$ is regular and symmetric, any optimal disclosure rule has a threshold structure with thresholds $\bar{x}=\mathbb{E}(x)=1/2$ and $\bar{y}=\mathbb{E}(y)=1/2$.\footnote{In the proof of Statement 3 in Proposition \ref{propnn}, I show that regularity and symmetry implies that $\bar{x}=\mathbb{E}(x)$ and $\bar{y}=\mathbb{E}(y)$ in any optimal disclosure rule.} In short, if the object has ``above average'' profitability, the seller discloses evidence about that object's value if and only if the evidence induces an ``above average'' posterior. Conversely, if the object has ``below average'' profitability, the seller discloses evidence about that object's value if and only if the evidence induces a ``below average'' posterior. 

Accounting for the acquired precision $\theta$ and the optimal disclosure rule, the distribution of posterior means induced on the buyer is
$$F^B(x,\theta,d)=\begin{cases}
0\text{, if }x<\frac{1-\theta}{2},\\
\frac{1}{2\theta}\left[x-\frac{1-\theta}{2}\right]\text{, if }x\in\left[\frac{1-\theta}{2},\frac{1}{2}\right),\\
\frac{1}{2}+\frac{1}{2\theta}\left[x-\frac{1-\theta}{2}\right]\text{, if }x\geqslant \frac{1}{2}.
\end{cases}$$
In words, with probability $1/2$ the seller does not disclose the signal realization, and the buyer's posterior mean is equal to $x^{ND}=1/2$. With half probability the seller discloses the signal realization -- either below average realizations for less profitable objects or above average realizations for the more profitable objects -- in which case the distribution of posteriors induced on the buyer is $U\left[\frac{1-\theta}{2},\frac{1+\theta}{2}\right]$. 

Because $\theta'>\theta$ implies that $U\left[\frac{1-\theta'}{2},\frac{1+\theta'}{2}\right]$ is a strict mean-preserving spread of $U\left[\frac{1-\theta}{2},\frac{1+\theta}{2}\right]$, we can conclude that, when transparency increases, the seller acquires strictly less precise signals and provides strictly less information to the buyer. (This example illustrates result 3 in Proposition \ref{propnn}.)

\section{Alternative Communication Protocols}\label{sec:alt}
The model makes two main assumptions regarding the communication protocol. The first one is that the seller can commit to a rule to disclose signal realizations, prior to the object's profitability or the ``evidence'' about its value being drawn. This commitment assumption is typical to the information design literature, recently surveyed in Kamenica (2019). Second, we assume that the seller can use only disclosure policies, \`{a} la Dye (1985), rather than committing to more general signal structures. 

In this section, we consider two variations of the model, dropping each of these two assumptions. Under each alternative communication protocol, we investigate whether mandating transparency about the seller's motives improves the informativeness of their advice to the buyer.
\subsection{Disclosure with No Commitment}
In this section, we allow the support of the object's profitability to include negative profitability values, so we may have $y_{min}<0$.\footnote{Note that the characterization of optimal disclosure rules under the commitment protocol, given in Theorem \ref{th:1}, also applies when $y_{min}<0$. I impose the positive profitability assumption in the earlier sections because it simplifies the analysis of the impact of transparency policies.} When $y_{min}<0<y_{max}$, the buyer is unsure whether the profitability is such that the seller wishes to maximize the object's probability of sale or to minimize it. Now consider the following disclosure protocol with no commitment: First, the seller observes the object's profitability and a signal realization. After that, the seller chooses whether to disclose the signal realization (a piece of \emph{evidence} about the object's value) to the buyer. The buyer observes the disclosed evidence -- or the fact that no evidence was disclosed -- and forms a posterior mean about the object's value, taking into account the seller's equilibrium strategy. As before, the buyer does not observe the object's profitability.\footnote{We assume that the seller cannot choose to disclose the object's profitability to the buyer. This assumption is not restrictive, because as we shall see, in equilibrium the seller would never want to disclose the object's profitability.} The equilibrium notion is Perfect Bayesian Equilibrium.\footnote{Because of the evidence structure and our restriction to the evidence disclosure protocol, there is no ``informed principal'' problem when defining an equilibrium.}

\begin{proposition}
For any demand function $p$, an equilibrium exists, and any equilibrium disclosure strategy $d^*$ has a threshold structure:  $d^*(x,y)\in\{0,1\}$ and\footnote{Assuming that, when indifferent, the seller discloses the signal realization.}
$$d^*(x,y)=1\Leftrightarrow (x-\bar{x})y\geqslant 0,$$
for some $\bar{x}\in\mathcal{X}$, satisfying $\bar{x}=\mathbb{E}\left[x|(x-\bar{x})y<0\right]$ if $\{(x-\bar{x})y<0\}\neq \varnothing$.
\label{pr:5}
\end{proposition}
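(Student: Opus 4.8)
The plan is to decouple the seller's incentive compatibility from the buyer's Bayes consistency, and to reduce existence to a one-dimensional fixed-point problem in the scalar $x^{ND}$. First I would fix an arbitrary PBE and let $x^{ND}$ denote the buyer's posterior mean after non-disclosure. Because evidence is hard and, by the independence assumption, disclosing a realization $x$ induces posterior mean exactly $x$, the seller at state $(y,x)$ compares the disclosure payoff $yp(x)$ against the concealment payoff $yp(x^{ND})$. Since $p$ is strictly increasing, $p(x)-p(x^{ND})$ carries the sign of $x-x^{ND}$, so the seller strictly prefers to disclose iff $y(x-x^{ND})>0$ and to conceal iff $y(x-x^{ND})<0$, with indifference confined to the set where $y=0$ or $x=x^{ND}$, which is null because the joint distribution has no mass points. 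Breaking ties toward disclosure yields $d^*(y,x)\in\{0,1\}$ almost everywhere and exactly the threshold form $d^*(y,x)=1\Leftrightarrow(x-\bar{x})y\geqslant0$ with $\bar{x}=x^{ND}$. This establishes the necessity claim.

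Next I would impose buyer rationality. When the non-disclosure event $\{(x-\bar{x})y<0\}$ has positive probability, Bayes' rule pins down $x^{ND}=\mathbb{E}\left[x\mid(x-\bar{x})y<0\right]$, which together with $\bar{x}=x^{ND}$ is precisely the stated consistency condition. When that event has probability zero, non-disclosure is off path and $x^{ND}$ is an unrestricted belief, so no condition is required; this matches the qualifier ``if $\{(x-\bar{x})y<0\}\neq\varnothing$.''

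For existence I would define $T(\bar{x})=\mathbb{E}\left[x\mid(x-\bar{x})y<0\right]$ on $\mathcal{X}$ and seek a fixed point, splitting on the sign of profitability. If $\Pr(y<0)=0$, the non-disclosure region at $\bar{x}=x_{min}$ is null, and I would verify directly that full disclosure, supported by the pessimistic off-path belief $x^{ND}=x_{min}$, is a PBE, since every type with $y>0$ weakly prefers disclosure as $p(x)\geqslant p(x_{min})$ (here $\mathbb{E}(y)>0$ guarantees $\Pr(y>0)>0$). If instead $\Pr(y<0)>0$, the non-disclosure set $\{y>0,\,x<\bar{x}\}\cup\{y<0,\,x>\bar{x}\}$ has probability bounded away from zero on $[x_{min},x_{max}]$ (its mass equals $\Pr(y<0)$ at the left endpoint and $\Pr(y>0)$ at the right, both positive, and both pieces are positive in the interior), so by absence of mass points $T$ is continuous. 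The endpoint values $T(x_{min})=\mathbb{E}[x\mid y<0]>x_{min}$ and $T(x_{max})=\mathbb{E}[x\mid y>0]<x_{max}$ make $\phi(\bar{x}):=T(\bar{x})-\bar{x}$ positive at $x_{min}$ and negative at $x_{max}$, so the intermediate value theorem delivers a fixed point $\bar{x}\in(x_{min},x_{max})$ at which non-disclosure is on path. Combined with the previous two steps (and $x^{ND}=\bar{x}$ on path), this fixed point constitutes the desired equilibrium.

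I expect the main obstacle to lie in the existence step rather than the characterization. The delicate points are showing the non-disclosure probability stays bounded away from zero so that $T$ is genuinely continuous—this is where the no-mass-points assumption and the two-sidedness of the sign of $y$ enter—and correctly isolating the degenerate all-positive-profitability case, where classical unravelling forces $\bar{x}=x_{min}$ and the equilibrium survives only through off-path beliefs. A minor check is confirming that the tie-breaking convention is immaterial, since the indifference sets $\{y=0\}$ and $\{x=x^{ND}\}$ are null.
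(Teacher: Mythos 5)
Your proposal is correct and follows essentially the same route as the paper's proof: characterize the seller's best response as the threshold rule $d^*(y,x)=1\Leftrightarrow(x-\bar{x})y\geqslant0$ given a non-disclosure posterior $x^{ND}$, then obtain existence via the intermediate value theorem applied to $\bar{x}\mapsto\mathbb{E}\left[x\mid(x-\bar{x})y<0\right]$, using the endpoint values and continuity from the no-mass-point/full-support assumptions. The only difference is cosmetic: you fold the one-sided-profitability case (where unravelling forces full disclosure supported by off-path beliefs) into the proof itself, whereas the paper handles it in the main text and restricts the appendix argument to $y_{min}<0<y_{max}$.
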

In evidence disclosure models where the advisor's motives are transparent -- for example, in the seminal works of Grossman (1981), Milgrom (1981) and Milgrom and Roberts (1986) -- equilibria feature full information revelation. The main insight is that any candidate equilibrium with some evidence concealment would ``unravel'' due to the receiver's (the buyer's) rational skepticism. Indeed, the threshold equilibria described in Proposition \ref{pr:5} can be consistent with the unravelling logic. Suppose the object's profitability is always positive ($y_{min}\geqslant0$), and conjecture a threshold equilibrium with some evidence concealment -- that is, suppose $\bar{x}\in(x_{min},x_{max})$. Then, because $y$ is always greater than $0$, it must be that $\bar{x}>\mathbb{E}\left[x|(x-\bar{x})y<0\right]$, and thus the equilibrium condition in Proposition \ref{pr:5} is not satisfied. Such a conjectured equilibrium would unravel: the buyer's posterior upon observing non-disclosure would be $\mathbb{E}\left[x|(x-\bar{x})y<0\right]$, which is strictly smaller than $\bar{x}$. Consequently, when the seller draws a signal realization just under $\bar{x}$, they strictly prefer to reveal it to the buyer, which is inconsistent with the initially conjectured equilibrium. An analogous unravelling argument applies if the object's profitability is always negative ($y_{max}\leqslant0$).

However, when profitability can be both positive or negative, so that $0\in(y_{min},y_{max})$, then any equilibrium involves partial disclosure. There is some interior threshold $\bar{x}$ such that the seller discloses (only and) all  ``good news'' about the object ($x\geqslant\bar{x}$) when profitability is positive, and (only and) all ``bad news'' ($x\leqslant\bar{x}$) about the object when profitability is negative. Partially uninformative equilibria do not unravel, because, upon observing non-disclosure, the buyer does not know whether the seller has ``good news'' but negative profitability or ``bad news'' but positive profitability.

In such an equilibrium, the seller's strategy is akin to \emph{sanitization strategies} as in Dye (1985): conditional on any particular profitability level $y$, the seller discloses only sufficiently favorable information to the buyer. (With the caveat that, to a seller with negative profitability, only ``bad news'' are sufficiently favorable.) The existence of partly-uninformative equilibria as described in Proposition \ref{pr:5}  is in line with Seidmann and Winter's (1997) observation that in disclosure environments where the sender's preferred action depends on their type, there may exist equilibria in which their type is not always revealed in equilibrium. Note also that equilibrium disclosure strategies as described in the proposition are independent of the shape of the demand function $p$ -- in contrast to the model with commitment, where optimal disclosure rules differ based on the curvature of $p$. 

Finally, I turn to the question of whether mandated transparency incentivizes the seller to disclose information about the object's value to the buyer. When the seller's evidence structure is taken as given (that is, when we do not consider the seller's problem of acquiring information in the first place), then mandating transparency is unequivocally beneficial for informativeness. Indeed, under transparency, the seller is fully informative.

\begin{proposition}
Under transparency, full disclosure is the unique equilibrium.
\label{pr:6}
\end{proposition}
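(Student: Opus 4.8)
The plan is to exploit the fact that, under transparency, the buyer observes the profitability $y$ together with the seller's message, so the buyer's posterior about the object's value is formed conditionally on $y$. This decouples the PBE disclosure game into a continuum of independent voluntary-disclosure subgames, one for each profitability level. Within the subgame indexed by $y$, the seller's continuation payoff from inducing posterior mean $\hat{x}$ is $y\,p(\hat{x})$, which is strictly monotone in $\hat{x}$: strictly increasing when $y>0$ and strictly decreasing when $y<0$. Each subgame is therefore a standard single-agent disclosure problem in which the seller has a (sign-of-$y$) monotone preference over the induced posterior mean, and transparency precisely removes the confounding that, in Proposition \ref{pr:5}, allowed non-disclosure to pool ``good news with negative profitability'' against ``bad news with positive profitability.''

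First I would fix $y>0$ and run the unravelling argument. Suppose toward a contradiction that in some equilibrium the set $S_y$ of non-disclosed realizations has positive probability, and let $x^{ND}_y=\mathbb{E}[x\mid x\in S_y]$ be the associated non-disclosure posterior mean. Because $F_{X|y}$ has no mass points, a positive-probability set cannot be concentrated at its own mean: if every non-disclosed $x$ were at most $x^{ND}_y$, then $\mathbb{E}[x\mid x\in S_y]=x^{ND}_y$ would force $x=x^{ND}_y$ almost surely on $S_y$, i.e.\ a mass point. Hence $S_y$ contains realizations $x>x^{ND}_y$ with positive probability, and any such type strictly prefers to disclose, since $y\,p(x)>y\,p(x^{ND}_y)$, contradicting concealment. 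So $S_y$ has probability zero. The case $y<0$ is symmetric: $y\,p$ is strictly decreasing, the non-disclosure set would have to sit below $x^{ND}_y$, and this forces $\mathbb{E}[x\mid x\in S_y]<x^{ND}_y$, again a contradiction. The single level $y=0$ has probability zero and leaves the seller indifferent, so under the tie-breaking convention of Proposition \ref{pr:5} it is disclosed and in any case does not affect the buyer's beliefs.

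To finish, I would verify that full disclosure is in fact an equilibrium. With every realization disclosed on path, non-disclosure is off path; assigning the pessimistic off-path belief $\hat{x}=x_{min}$ when $y\geqslant 0$ and $\hat{x}=x_{max}$ when $y<0$ makes every type weakly prefer disclosure, so no profitable deviation exists. Combined with the preceding paragraph, which rules out any equilibrium with a positive-probability non-disclosure event, this establishes that full disclosure is the unique equilibrium outcome.

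The main obstacle is less a single hard computation than making the two ingredients interlock cleanly. One must argue that transparency genuinely lets the buyer condition on $y$, so that each $y$-subgame is self-contained and the cross-profitability pooling of Proposition \ref{pr:5} is unavailable; and one must use the no-mass-point assumption carefully to guarantee the existence of a strictly-above-average (respectively strictly-below-average) concealed type that breaks any candidate partial-disclosure equilibrium, rather than settling for a weak inequality that a knife-edge distribution could satisfy.
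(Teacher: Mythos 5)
Your proposal is correct and follows essentially the same route as the paper, which simply observes that transparency lets the buyer condition on $y$ and then invokes the standard unravelling argument; your write-up fleshes out that argument (using the no-mass-point assumption to find a strictly-above-mean concealed type, plus off-path beliefs to verify existence) but adds no genuinely different idea.
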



\subsection{Unconstrained Signaling Technology}
The usual assumption in information design models is that a sender (the seller, in our case) commits a signal a map from states of the world (the pieces of evidence about the object's value) into distributions of messages to inform a receiver (the buyer) about the state. The sender's choice of such a map is unrestricted. Contrastingly, in this paper, I assumed that the sender is restricted to a class of ``signaling strategies'': the class of disclosure rules, whereby the sender's message either fully conveys the information in a piece of evidence, or is ``silent.'' Such silence is a message in itself, which conveys to the receiver that the state of the world is ``one of the evidence-profitability pairs that would lead the sender to stay silent.''

In the \emph{constrained information design} problem I study, Theorem \ref{th:1} provides quite a complete characterization of the seller's optimal messaging strategy. In comparison, a characterization of the sender's optimal signal in the equivalent unconstrained design problem is elusive -- Rayo and Segal (2010) provide a partial characterization of the optimal signal when the demand function $p$ is affine.

Regardless, some of the results in this paper also hold in the ``unconstrained design version'' of the problem. For example, Proposition \ref{propp1}, and its Corollary \ref{cor:3} would still be true, so that transparency may be detrimental to the seller's incentives to relay information about the object's value to the buyer. Further, Proposition \ref{propn} would also be true in that unconstrained benchmark, highlighting that also in that case transparency hinders the sellers motivation to acquire information about the object's value. However, precisely because we cannot characterize optimal signals in the unconstrained problem, it is hard to pinpoint if and when the extra information would be relayed to the receiver. Consequently, the effect of transparency on the seller's overall informativeness to the receiver are unclear.

\section{Discussion of an Application}\label{sec:apl}

The model describes a situation where a seller informs a potential buyer about the value of an object for sale, where the buyer does not know how profitable the sale of such object would be to the seller. 
Below, I interpret features of the model in the context of a brokerage company who provides financial advice to investors, but also receives commissions for the sale of financial products. In the context of financial brokers, clients understand that brokers receive sales commissions from some product sales, but may not know the size of the commissions on each product (and hence their motives are hidden).

There are many other contexts in which people consume information disclosed by entities with hidden motives: 
doctors inform patients of the effectiveness of different drugs and procedures, but are often sponsored by pharmaceutical companies; online platforms gather reviews from product users, but may disclose them coarsely to benefit some producers; magazines and newspapers selectively publish pieces of reporting that align with their editorial bias. In all the mentioned settings, information receivers understand the information providers may be biased, but are not fully aware of the extent of the conflict.

\subsection{Financial Advisors with Hidden Motives}

When an investor opens an account at a brokerage company, they usually get some access to product reports put together by the company's research team. In addition to publicly available data about companies and industries, a report on a particular product includes forecasts and evaluations produced by the team. Upon seeing the provided research, the investor compares the perceived value of the product to an outside option, which could depend on their current appetite for investment, desire to reallocate their current portfolio, or even independent information she may have sourced about the financial product at hand. 

\vskip.3cm
\textbf{Hidden Motives.} The incentives of the advisor/broker may not be fully understood by the investor. For instance, some of the products available in the brokerage system are proprietary products that are  issued or managed by the broker themselves. Upon selling one of these assets, the broker receives extra compensation. Another source of conflict is that third parties commonly pay the broker for marketing and selling their products, which may make the sale of some products more desirable than others. These considerations can fuel the broker's desire to produce advice that steers investors to the more profitable products.  Indeed, it is well documented that brokers conceal bad news about companies in which they have financial interests and that financial advisors' recommendations favor products with high commissions.\footnote{See, for example, Anagol, Cole and Sarkar (2017) and Eckardt and Rathke-Doppner (2010) about insurance brokers; Chalmers and Reuter (2020) on retirement plans; Inderst and Ottaviani (2012.2) on general financial advice. For a survey on quality disclosure and certification, see Dranove and Jin (2010).} 
In line with our modeling assumption, in this context, clients understand that some products are more profitable to the brokers than others, but the specific profitability of each product's sale to the broker is \emph{hidden}.

\vskip.3cm
\textbf{Disclosure.} In the model, the seller produces information about a product and chooses whether to  share the outcome of the research with the buyer or not. This feature of the model matches the existence of regulatory ``Chinese walls'' within financial institutions. To ensure that the analyses produced by a broker are not biased by the companies' profit-seeking interests, various regulations institute walls between research departments and client-facing teams. These walls ensure that the research production is not tainted by the profitability considerations of the brokerage company. As in the model, the reports produced can be understood as unbiased evidence about a product's quality, while the disclosure of this evidence remains under the discretion of the sales team.

\vskip.3cm
\textbf{Regulation.} Because investors often rely on professional advice when making investment decisions, financial advisory is a highly regulated field. (And has become increasingly so since the Great Recession and the institution of the Dodd-Frank Act.) In the United States, the regulation varies across states, and also across categories within the field -- for example, financial advisors and insurance agents are often subject to more stringent regulation than broker-dealers, who often also provide advice. Specific details notwithstanding, much of the regulation focuses on making the interests of financial advisors transparent to investors who consult them. For example, they might be required to disclose other sources of compensation received beyond service fees, professional affiliations with another broker-dealer or securities issuer, and other potential or existing conflicts of interest. Other types of regulation also forbid advisors from receiving some types of compensation. For example, the UK and the Netherlands have altogether imposed bans on commission payments for some types of financial advisors.

\vskip.3cm
\textbf{Commitment.} In information design models, a common interpretation of the commitment assumption is that it stands in for the sender's desire to build and sustain \emph{credibility} with the receiver in a game of repeated interaction. Think of a circumstance where a financial advisor repeatedly interacts with an investor, informing them about  assets' values over a long time. And suppose that this investor follows the advisor's recommendation, but also sees the outcome of their investment decisions (so they can tell, to some extent, whether the advice was sound). Over time, the investor can ``punish'' unsatisfactory advice from the seller with future incredulity -- maybe by not following their recommendation, maybe by seeking another advisor. Through that mechanism, in the repeated game, the advisor's optimal disclosure strategy would be such that the buyer's interpretation of a ``no disclosure'' message coincides with the interpretation yielded in the one-shot game under the commitment assumption. This point is formally made by Best and Quigley (2020), and a similar argument linking commitment and reputation is provided by Mathevet, Pearce and Stacchetti (2022).

A recent paper, Lin and Liu (2022) propose a new notion of credibility for information design problems: A disclosure policy is credible if the sender cannot profit from tampering with her messages while keeping the message distribution unchanged. It's useful to note that all optimal disclosure rules characterized in Theorem \ref{th:1} are credible, according to this definition. In other words, if the sender can commit to a distribution of posterior means to induce on the receiver, then the optimal way to correlate these posterior means with the object's profitability is attained with threshold-structure disclosure rules as described in the Theorem. Consequently, the seller would not ``tamper'' with messages, conditional on maintaining the distribution of induced posteriors.\footnote{More formally, for each $x\in \mathcal X$, a deviation by the sender is ``undetectable'' in Lin and Liu's (2022) sense, only if it keeps the marginal probability of sending that message unchanged. Therefore, any disclosure rule other than the optimal rule as given in Theorem \ref{th:1}, but which  induces that same marginal distribution over messages, must invariably swap the disclosure of a realization $(x,y)$ and a different realization $(x,y')$; i.e., two states with the same evidence realization $x$. But this would never be a profitable deviation for the seller because the optimal disclosure rule already has a monotone structure where the buyer purchases with a weakly higher probability at higher values of profitability $y$. I thank Nageeb Ali for the discussion in this footnote.}

\vskip.3cm
\textbf{Related Literature.}
In a series of papers in 2012, Inderst and Ottaviani (2012.1, 2012.2, 2012.3) propose models of brokers and financial advisors  compensated through commissions.
In their model, competing sellers offer commissions to the advisor, knowing that he will steer business to the seller that offers highest compensation. The price of the asset is set equal to buyers' expected value for it, so that the consumer retains no surplus. An implication is that information is not valuable to the consumer. The authors show that biased commissioned agents may improve welfare by providing  buyers with less information and steering business to high-commission firms who are also more cost efficient. My model takes an alternative approach, taking the sender's distribution of profitability as given and focusing on the value of the information provided to buyers. In my environment, information is always beneficial to the consumer and I show that hidden motives can improve surplus precisely by increasing the amount of information provided to the consumer.

There is also a large literature that studies the provision of information by Credit Rating Agencies that are financed by fees paid by issuers of financial products. Some key papers in this literature are Bolton, Freixas, and Shapiro (2012), Opp, Opp and Harris (2013), Bar-Isaac and Shapiro (2012) and Skreta and Veldkamp (2009). In this literature, the CRA receives payments equally from all issuers of financial products. In my model, the main concern is that the advisor benefits some products over others because they have different profitabilities.

\section*{References}
\noindent Anagol, S., S. Cole, and S. Sarkar. (2017) ``Understanding the Advice of Commissions Motivated Agents: Evidence from the Indian Life Insurance Market.'' \emph{Review of Economics and Statistics}, \textbf{99}, 1-15.

\vskip.3cm\noindent Ball, I. and X. Gao. (2021) ``Benefitting from Bias.'' \emph{working paper}.

\vskip.3cm\noindent Bar-Isaac, H., and J. Shapiro. (2013) ``Ratings Quality over the Business Cycle.'' \emph{Journal of Financial Economics}, \textbf{108}, 62-78.

\vskip.3cm\noindent Best, J., and D. Quigley. (2020) ``Persuasion for the Long Run.'' \emph{working paper}.


\vskip.3cm\noindent Bolton, P., X. Freixas, and J. Shapiro. (2012) ``The Credit Ratings Game.'' \emph{The Journal of Finance}, \textbf{67}, 85-111.
NBR 6023	

\vskip.3cm\noindent Chalmers, J. and J. Reuter. (2020) ``Is Conflicted Investment Advice Better than no Advice?.'' \emph{Journal of Financial Economics}, forthcoming.


\vskip.3cm\noindent Che, Y.K., and N. Kartik (2009) ``Opinions as Incentives.'' \emph{Journal of Political Economy}, \textbf{117}, 815-860.

\vskip.3cm\noindent Curello, G. and L. Sinander (2022) ``The Comparative Statics of Persuasion,'' \emph{working paper}.


\vskip.3cm\noindent DeMarzo, P. M., I. Kremer and A. Skrzypacz (2019) ``Test Design and Minimum Standards.'' \emph{American Economic Review}, \textbf{109}: 2173-2207. 

\vskip.3cm\noindent Doval, L., and V. Skreta. (2021) ``Constrained Information Design,'' \emph{working paper}.


\vskip.3cm\noindent Dranove, D. and G. Jin. (2010) ``Quality Disclosure and Certification: Theory and Practice.'' \emph{Journal of Economic Literature}, \textbf{48}: 935-63.

\vskip.3cm\noindent Dye, R. A. (1985) ``Disclosure of Nonproprietary Information.'' \emph{Journal of Accounting Research}, \textbf{23}: 123-145.


\vskip.3cm\noindent Eckardt, M., and S. Rathke-Doppner. (2010) ``The Quality of Insurance Intermediary Services -- Empirical Evidence for Germany." \emph{Journal of Risk and Insurance}, \textbf{77}, 667-701.

\vskip.3cm\noindent Ershov, D. and M. Mitchell. (2022) ``The Effects of Advertising Disclosure Regulations on Social Media: Evidence From Instagram.'' \textit{working paper}.





\vskip.3cm\noindent Grossman, S.  (1981)  ``The Informational Role of Warranties and Private Disclosure about Product Quality.'' \emph{Journal of Law \& Economics}, \textbf{24}: 461-484. 

\vskip.3cm\noindent Hwang, I., K. Kim, and R. Boleslavsky (2019). ``Competitive Advertising and Pricing,'' \emph{working paper}.


\vskip.3cm\noindent Inderst, R., and M. Ottaviani. (2012) ``Competition through Commissions and Kickbacks.'' \emph{American Economic Review}, \textbf{102}, 780-809.

\vskip.3cm\noindent Inderst, R., and M. Ottaviani. (2012) ``Financial Advice''. \emph{Journal of Economic Literature}, \textbf{50}, 494-512.

\vskip.3cm\noindent Inderst, R., and M. Ottaviani. (2012) ``How (not) to Pay for Advice: A Framework for Consumer Financial Protection.'' \emph{Journal of Financial Economics}, \textbf{105}, 393-411.

\vskip.3cm\noindent Jung, W.-O. and Y. K. Kwon (1988) ``Disclosure When the Market Is Unsure of Information Endowment of Managers.'' \emph{Journal of Accounting Research}, \textbf{26}: 146-153. 

\vskip.3cm\noindent Kamenica, E. (2019) ``Bayesian Persuasion and Information Design.'' \emph{Annual Review of Economics}, \textbf{11}.

\vskip.3cm\noindent Kamenica, E., and M. Gentzkow. (2011) ``Bayesian Persuasion.'' \emph{American Economic Review}, \textbf{101}, 2590-2615.

\vskip.3cm\noindent Kartik, N., F. X. Lee, and W. Suen. (2017) ``Investment in Concealable Information by Biased Experts.'' \emph{The RAND Journal of Economics}, \textbf{48}: 24-43.



\vskip.3cm\noindent Lipnowski, E. and D. Ravid. (2020) ``Cheap Talk with Transparent Motives.'' \emph{Econometrica}, \textbf{88}: 1631-1660.

\vskip.3cm\noindent Lin, X. and C. Liu. (2022) ``Credible Persuasion,'' \emph{working paper}.

\vskip.3cm\noindent Mathevet, L., D. Pearce, and E. Stacchetti. (2022) ``Reputation for a Degree of Honesty.'' \emph{working paper}.

\vskip.3cm\noindent Matthews, Steven, and Andrew Postlewaite. (1985) ``Quality Testing and Disclosure.'' \emph{RAND Journal of Economics}: 328-340.

\vskip.3cm\noindent Mensch, J. (2021) ``Monotone Persuasion.'' \emph{Games and Economic Behavior}, \textbf{130}: 521-542.

\vskip.3cm\noindent Milgrom, P.  (1981) ``Good News and Bad News: Representation Theorems and Applications.'' \emph{The Bell Journal of Economics}, \textbf{12}: 380-391. 

\vskip.3cm\noindent Milgrom, P. (2008) ``What the Seller Won't Tell You: Persuasion and Disclosure in Markets.'' \emph{Journal of Economic Perspectives}, \textbf{22}: 115?131. 

\vskip.3cm\noindent Milgrom, P. and J. Roberts. (1986) ``Relying on the Information of Interested Parties.'' \emph{RAND Journal of Economics}: 18-32.


\vskip.3cm\noindent Onuchic, P. \textcircled{r} D. Ray. (2022) ``Conveying Value via Categories,'' \emph{Theoretical Economics}, forthcoming.

\vskip.3cm\noindent Opp, C., M. Opp, and M. Harris. (2013) ``Rating Agencies in the Face of Regulation.'' \emph{Journal of Financial Economics}, \textbf{108}, 46-61.


\vskip.3cm\noindent Rayo, L., and I. Segal (2010) ``Optimal Information Disclosure." \emph{Journal of Political Economy}, \textbf{118}, 949--987.

\vskip.3cm\noindent Seidmann, Daniel J., and Eyal Winter. (1997) ``Strategic Information Transmission with Verifiable Messages.'' \emph{Econometrica}: 163-169.

\vskip.3cm\noindent Shishkin, D. (2022) ``Evidence Acquisition and Voluntary Disclosure." \emph{working paper}.


\vskip.3cm\noindent Skreta, V., and L. Veldkamp. (2009) ``Ratings Shopping and Asset Complexity: A Theory of Ratings Inflation.'' \emph{Journal of Monetary Economics}, \textbf{56}, 678-695.

\vskip.3cm\noindent Szalay, D. (2005) ``The Economics of Clear Advice and Extreme Options.'' \emph{The Review of Economic Studies}, \textbf{72}: 1173-1198.


\section*{Appendix}

\subsection*{Proof of Theorem \ref{th:1}} 
\label{pr:1}
I first complete the proof in the main text, showing that any disclosure rule that does not satisfy the described threshold structure described can be improved upon. To that end, consider $d$ and $\hat{d}$ as in (\ref{p1}) and (\ref{p2}). We prove the following two claims used in the main text.
\begin{claim}
$d$ and $\hat{d}$ produce the same overall probability of sale.
\end{claim}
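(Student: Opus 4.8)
The plan is to show that both the aggregate sale probability $\mathbb{E}[P(y,d)]$ and the non-disclosure posterior $x^{ND}$ depend on the disclosure rule only through the \emph{marginal} disclosure probabilities $\delta(x):=\int_\mathcal{Y} d(y,x)\,dF_{Y|x}(y)$, which condition (\ref{eq:ddhat}) forces $d$ and $\hat{d}$ to share. The essential tool is that the joint law of $(x,y)$ can be disintegrated either way, so that $dF_{X|y}(x)\,dF_Y(y)=dF_{Y|x}(y)\,dF_X(x)$ as measures on $\mathcal{X}\times\mathcal{Y}$, where $F_X$ is the marginal distribution of signal realizations. This Fubini-type exchange is what lets me convert the $y$-outer integrals appearing in $P$ and in (\ref{avg}) into $x$-outer integrals against $\delta$.

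First I would record that $x^{ND}$ coincides for $d$ and $\hat{d}$. Rewriting the numerator and denominator of (\ref{avg}) by switching the order of integration, they become $\int_\mathcal{X} x\,(1-\delta(x))\,dF_X(x)$ and $\int_\mathcal{X}(1-\delta(x))\,dF_X(x)$ respectively, each a functional of $\delta$ alone. Since $d$ and $\hat{d}$ share the same $\delta$ by (\ref{eq:ddhat}), they induce the same $x^{ND}$, and hence the same $p(x^{ND})$.

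Next I would expand $\mathbb{E}[P(y,d)]=\int_\mathcal{Y}\int_\mathcal{X}\left[d(y,x)p(x)+(1-d(y,x))p(x^{ND})\right]dF_{X|y}(x)\,dF_Y(y)$ and split it into a disclosed part and a concealed part. Switching the order of integration in each, the disclosed part becomes $\int_\mathcal{X} p(x)\,\delta(x)\,dF_X(x)$ and the concealed part becomes $p(x^{ND})\int_\mathcal{X}(1-\delta(x))\,dF_X(x)$. Both are functionals of $\delta$ and $x^{ND}$ only, and both were just shown to be invariant across $d$ and $\hat{d}$, so the two rules yield the same aggregate sale probability.

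The only genuine subtlety is the first step: because $p(x^{ND})$ enters the sale probability nonlinearly, I must confirm that the threshold rebalancing leaves $x^{ND}$ untouched, and this is exactly what condition (\ref{eq:ddhat}) delivers once the integrals in (\ref{avg}) are rewritten against the marginal $F_X$. Everything else is a routine interchange of integration order; notably, no convexity, concavity, or monotonicity of $p$ is invoked, which is why this equality of sale probabilities holds for an arbitrary demand function.
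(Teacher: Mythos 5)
Your proof is correct and follows essentially the same route as the paper's: both arguments hinge on swapping the order of integration so that the relevant quantities depend on the disclosure rule only through the conditional disclosure probabilities $\int_\mathcal{Y}d(y,x)\,dF_{Y|x}(y)$, which condition (\ref{eq:ddhat}) equalizes across $d$ and $\hat{d}$. Your version is slightly more explicit in verifying that $x^{ND}$ itself is invariant (the paper asserts this in the main text just before the claim), but the substance is identical.
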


\begin{proof}
\begin{align*}
\mathbb{E}[P(y,\hat{d})]&-\mathbb{E}[P(y,d)]=\\
&=\int_\mathcal{Y}\int_\mathcal{X}\left[p(x)-p(x^{ND})\right]\left[\hat{d}(x,y)-d(x,y)\right]dF_{X|y}(x)dF_Y(y)\\
&=\int_\mathcal{X}\left[p(x)-p(x^{ND})\right]\int_\mathcal{Y}\left[\hat{d}(x,y)-d(x,y)\right]dF_{Y|x}(y)dF_X(x)=0
\end{align*}
where $F_{Y|x}$ is the profitability distribution conditional on a signal realization $x$ and $F_X$ is the marginal distribution of signal realizations. The first equality uses the definition of $P(y,d)$ and the third is due to $d$ and $\hat{d}$ disclosing each realization with the same probability, as in (\ref{eq:ddhat}).
\end{proof}

\begin{claim}
$\hat{d}$ induces a larger covariance between sales and profitability than $d$.
\end{claim}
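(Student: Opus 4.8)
The plan is to reduce the covariance comparison to a pointwise sign argument, using Claim 1. Since $\text{Cov}[y,P(y,d)]=\mathbb{E}[yP(y,d)]-\mathbb{E}(y)\mathbb{E}[P(y,d)]$, and Claim 1 gives $\mathbb{E}[P(y,\hat d)]=\mathbb{E}[P(y,d)]$ (with $\mathbb{E}(y)$ fixed), the difference in covariances equals the difference in first moments, $\mathbb{E}[yP(y,\hat d)]-\mathbb{E}[yP(y,d)]$. So it suffices to show this quantity is strictly positive.

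I would then write $P(y,d)=p(x^{ND})+\int_\mathcal{X}d(y,x)[p(x)-p(x^{ND})]\,dF_{X|y}(x)$, noting that $d$ and $\hat d$ share the same $x^{ND}$ by (\ref{eq:ddhat}). The $p(x^{ND})$ term does not depend on the disclosure rule and cancels, so the difference becomes a double integral of $y[\hat d(y,x)-d(y,x)][p(x)-p(x^{ND})]$ against the joint law. Changing the order of integration to condition on $x$ (Fubini), I would isolate, for each $x$, the inner integral $I(x)=\int_\mathcal{Y}y[\hat d(y,x)-d(y,x)]\,dF_{Y|x}(y)$, reducing the target to $\int_\mathcal{X}[p(x)-p(x^{ND})]I(x)\,dF_X(x)$.

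The core step is a sign argument for $[p(x)-p(x^{ND})]I(x)$. Because (\ref{eq:ddhat}) forces $\int_\mathcal{Y}[\hat d(y,x)-d(y,x)]\,dF_{Y|x}(y)=0$, I can recenter: $I(x)=\int_\mathcal{Y}(y-\hat y(x))[\hat d(y,x)-d(y,x)]\,dF_{Y|x}(y)$. For $x>x^{ND}$, the threshold rule sets $\hat d(y,x)=1$ for $y\geqslant\hat y(x)$ and $0$ otherwise, so $\hat d-d\geqslant0$ above the threshold and $\hat d-d\leqslant0$ below; in both regions the product $(y-\hat y(x))[\hat d-d]$ is nonnegative, giving $I(x)\geqslant0$, while $p(x)-p(x^{ND})>0$ since $p$ is strictly increasing. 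For $x<x^{ND}$ the threshold rule instead discloses low-profitability objects, the signs of both factors flip, and the product $[p(x)-p(x^{ND})]I(x)$ is again nonnegative. Hence the integrand is nonnegative for every $x$.

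For strictness I would argue that, because $d$ fails the threshold structure, $\hat d\neq d$ on a set of positive measure on which additionally $x\neq x^{ND}$, so the integrand is strictly positive there and the whole integral is strictly positive. The main obstacle is the sign bookkeeping in the recentered integral across the two cases, i.e.\ verifying the single-crossing property of $\hat d-d$ implied by (\ref{eq:ddhat}); the change of order of integration and the strictness argument are routine once that is established.
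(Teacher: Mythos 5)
Your proof is correct and takes essentially the same route as the paper's: reduce the covariance gap to $\mathbb{E}[yP(y,\hat d)]-\mathbb{E}[yP(y,d)]$ via Claim 1, apply Fubini to condition on $x$, and sign the inner integral using the single crossing of $\hat d-d$ at $\hat y(x)$ together with condition (\ref{eq:ddhat}). The only cosmetic difference is that you recenter $y$ at $\hat y(x)$, which makes the integrand pointwise nonnegative, whereas the paper centers at $\mathbb{E}(y)$ and invokes the mean-zero-difference-times-increasing-function argument; both close the strictness step at the same level of detail.
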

\begin{proof}
\begin{align}
&\text{Cov}\left[y,P(y,\hat{d})\right]-\text{ Cov}\left[y,P(y,\hat{d})\right]=\mathbb{E}\left[\left(P(y,\hat{d})-P(y,d)\right)\left(y-\mathbb{E}(y)\right)\right]\nonumber\\
&=\int_\mathcal{Y}\int_\mathcal{X}\left[p(x)-p(x^{ND})\right]\left[\hat{d}(x,y)-d(x,y)\right]\left[w-\mathbb{E}(y)\right]dF_{X|y}(x)dF_Y(y)\nonumber\\
&=\int_\mathcal{X}\left[p(x)-p(x^{ND})\right]\int_\mathcal{Y}\left[\hat{d}(x,y)-d(x,y)\right]\left[y-\mathbb{E}(y)\right]dF_{Y|x}(y)dF_X(x)\label{eq:cov}
\end{align}
By the definition of $\hat{d}$, for $x<x^{ND}$, $\hat{d}(x,y)-d(x,y)\geqslant 0$ when $y<\hat{y}(x)$ and $\hat{d}(x,y)-d(x,y)\leqslant 0$ when $y>\hat{y}(x)$ -- but, as given by (\ref{eq:ddhat}), the expected difference $\hat{d}(x,y)-d(x,y)$ is $0$. This, along with the fact that $y-\mathbb{E}(y)$ is increasing in $y$, implies that 
$$\int_\mathcal{Y}\left[\hat{d}(x,y)-d(x,y)\right]\left[y-\mathbb{E}(y)\right]dF_{Y|x}(y)\leqslant 0,$$
when $x\leqslant x^{ND}$. Analogously, we can show that 
$$\int_\mathcal{Y}\left[\hat{d}(x,y)-d(x,y)\right]\left[w-\mathbb{E}(y)\right]dF_{Y|x}(y)\geqslant 0,$$ 
when $x>x^{ND}$. Moreover, these inequalities are strict for a positive measure of signal realizations. These observations, along with the fact that $p$ is strictly increasing, deliver that the expression in (\ref{eq:cov}) is strictly positive. 
\end{proof}

\noindent Now we prove the following Lemma:
\begin{lemma}
\label{lem:2}
In any optimal disclosure rule, the profitability threshold satisfies (\ref{wth}).
\end{lemma}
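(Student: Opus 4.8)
The plan is to take an optimal rule $d^*$, which by the argument already given in the main text has a threshold structure, and to pin down the function $\bar y(x)$ through a local first-order perturbation at each realization whose threshold is interior. Write $x^{ND}$ for the non-disclosure posterior mean, let $N:=\int_\mathcal{Y}\int_\mathcal{X}(1-d^*(y,x))\,dF_{X|y}(x)\,dF_Y(y)$ be the probability of non-disclosure, and recall that the conditional mean profitability in that pool is $y^{ND}$, so that $\int\!\int_{\mathrm{ND}} y\,dF_{X|y}\,dF_Y = y^{ND}N$. Splitting $\Pi(d^*)$ as in (\ref{eq:p2}), the ``non-disclosed'' piece equals $p(x^{ND})\,y^{ND}N$, since every concealed object induces posterior mean $x^{ND}$ and sells with probability $p(x^{ND})$.

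Now fix a good-news realization $x>x^{ND}$ with $\bar y(x)\in\mathrm{int}(\mathcal{Y})$ and move an infinitesimal mass $\delta$ of objects with realization near $x$ and profitability near $\bar y(x)$ from the disclosure region into the non-disclosure region (for bad news $x<x^{ND}$ the roles of ``above'' and ``below'' the threshold are reversed and the computation is identical). Three effects enter the first-order change in $\Pi$. First, the moved objects, of profitability $\bar y(x)$, switch from selling with probability $p(x)$ to selling with probability $p(x^{ND})$; combining their removal from the disclosed piece with their addition to the concealed piece yields the net term $\bar y(x)\bigl[p(x^{ND})-p(x)\bigr]\delta$. Second, injecting mass $\delta$ at realization $x$ into a pool of size $N$ and mean $x^{ND}$ shifts the non-disclosure mean by $\Delta x^{ND}=\delta(x-x^{ND})/N$ to first order; this moves the common sale probability $p(x^{ND})$ of the entire concealed pool, whose profitability weight is $y^{ND}N$, contributing $p'(x^{ND})\,\Delta x^{ND}\,y^{ND}N = p'(x^{ND})(x-x^{ND})\,y^{ND}\,\delta$. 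Third, the shift in the cutoff $\bar x=x^{ND}$ reclassifies a mass of order $\delta$ of realizations near the good/bad-news boundary, but at $x=x^{ND}$ disclosure and concealment induce the same posterior mean and are payoff-equivalent, so this reclassification is an envelope term contributing only at order $\delta^{2}$ and drops out.

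Collecting the two surviving first-order terms, optimality requires the marginal object to be indifferent, i.e.\ the coefficient of $\delta$ must vanish:
\begin{equation*}
\bar y(x)\bigl[p(x^{ND})-p(x)\bigr]+p'(x^{ND})(x-x^{ND})\,y^{ND}=0.
\end{equation*}
Solving for $\bar y(x)$ gives
\begin{equation*}
\bar y(x)=y^{ND}\,\frac{p'(x^{ND})(x^{ND}-x)}{p(x^{ND})-p(x)},
\end{equation*}
which is exactly (\ref{wth}). For realizations at which this expression falls outside $\mathcal{Y}$ the interior perturbation is infeasible in one direction and the threshold sits at the corresponding boundary of $\mathcal{Y}$ (the realization is then disclosed, or concealed, for all $y$), consistent with the threshold structure already established.

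The step I expect to be the main obstacle is the bookkeeping of this ``pool feedback'': moving the threshold at one realization perturbs $x^{ND}$, which simultaneously moves the sale probability of every concealed object and shifts the good/bad-news cutoff $\bar x$. The derivation hinges on recognizing that, precisely because $\bar x=x^{ND}$, the cutoff-reclassification channel is second order, so that only the two first-order channels above survive; making this envelope property rigorous (rather than merely the algebra of the resulting first-order condition) is the crux of the argument.
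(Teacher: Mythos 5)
Your perturbation argument is correct and is essentially the paper's own proof in different clothing: the paper computes the pointwise derivative $\partial\Pi/\partial d(y,x)$, whose two terms (the direct change in the marginal object's sale probability and the feedback through $x^{ND}$ on the entire concealed pool) are exactly your two surviving first-order effects, and setting it to zero yields the same indifference condition and hence (\ref{wth}). Your additional envelope observation about the cutoff reclassification is an artifact of working within the threshold parameterization rather than with the free function $d$, and is handled correctly.
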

\begin{proof}
 Using (\ref{eq:p2}), for $y\in\mathcal{Y}$ and $x\in\mathcal{X}$, we can take a derivative of the sender's value with respect to $d(x,y)$, to get
\begin{align}
\frac{\partial \Pi}{\partial d(x,y)}=&y\left(p(x)-p(x^{ND})\right)dF_{X|y}(x)dF_Y(y)\nonumber\\
&+\left(\int_\mathcal{Y}\int_\mathcal{X}\tilde{y}\left[1-d(\tilde{y},\tilde{x})\right]dF_{X|\tilde{y}}(\tilde{x})dF_Y(\tilde{y})\right)p'(x^{ND})\frac{\partial x^{ND}}{\partial d(x,y)}\nonumber
\end{align}
Now from (\ref{avg}), we get
\begin{align*}
\frac{\partial x^{ND}}{\partial d(x,y)}=\frac{\int_\mathcal{Y}\int_\mathcal{X}(\tilde{x}-x)(1-d(\tilde{y},\tilde{x}))dF_{X|\tilde{y}}(\tilde{x})dF_Y(\tilde{y})}{\left(\int_\mathcal{Y}\int_\mathcal{X}(1-d(\tilde{y},\tilde{x}))dF_{X|\tilde{y}}(\tilde{x})dF_Y(\tilde{y})\right)^2}dF_{X|y}(x)dF_Y(y)
\end{align*}
Substituting this into the previous equation, we have
\begin{align}
\frac{\partial \Pi}{\partial d(x,y)}=&\left[y\left(p(x)-p(x^{ND})\right)-y^{ND}p'(x^{ND})(x-x^{ND})\right]dF_{X|y}(x)dF_Y(y),\label{foc3}
\end{align}
where $y^{ND}$ is the average object profitability given non-disclosure. It is easy to check that, if $x<x^{ND}$, 
\begin{align*}
\frac{\partial \Pi}{\partial d(x,y)}\begin{cases}
>0\text{, if }y<y^{ND}\left[\frac{p'(x^{ND})(x^{ND}-x)}{p(x^{ND})-p(x)}\right]\\
<0\text{, if }y>y^{ND}\left[\frac{p'(x^{ND})(x^{ND}-x)}{p(x^{ND})-p(x)}\right]
\end{cases}
\end{align*}
Conversely, if $x>x^{ND}$, 
\begin{align*}
\frac{\partial \Pi}{\partial d(x,y)}\begin{cases}
>0\text{, if }y>y^{ND}\left[\frac{p'(x^{ND})(x^{ND}-x)}{p(x^{ND})-p(x)}\right]\\
<0\text{, if }y<y^{ND}\left[\frac{p'(x^{ND})(x^{ND}-x)}{p(x^{ND})-p(x)}\right]
\end{cases}
\end{align*}
Now suppose that, with positive probability, either $d(x,y)\neq 0$ when (\ref{foc3}) is negative or $d(x,y)\neq 1$ when (\ref{foc3}) is positive. Then $d$ cannot be optimal. So any optimal disclosure rule must have a threshold structure where the profitability threshold satisfies 
$$\bar{y}(x)=y^{ND}\left[\frac{p'(x^{ND})(x^{ND}-x)}{p(x^{ND})-p(x)}\right]$$
for all $x\neq x^{ND}$. Equivalently, the profitability threshold satisfies (\ref{wth}).
\end{proof}
\noindent To complete the proof of Theorem \ref{th:1}, we show that an optimal disclosure rule exists.
\begin{lemma}
An optimal disclosure rule exists.
\end{lemma}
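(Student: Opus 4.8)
The plan is to prove existence by a direct compactness-and-continuity (Weierstrass) argument on the space of disclosure rules, rather than through the threshold reduction. Let $\mu$ denote the joint law of $(y,x)$ on $\mathcal{Y}\times\mathcal{X}$, so that $\int_\mathcal{Y}\int_\mathcal{X}\,dF_{X|y}(x)dF_Y(y)=\int\,d\mu$, and view a disclosure rule as an element $d\in L^2(\mu)$. The feasible set $\mathcal{D}=\{d\in L^2(\mu):0\leqslant d\leqslant 1\ \mu\text{-a.e.}\}$ is convex, norm-closed and bounded, hence weakly compact (and weakly closed, by Mazur) in the Hilbert space $L^2(\mu)$. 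Weak convergence $d_n\rightharpoonup d$ means $\int d_n g\,d\mu\to\int d g\,d\mu$ for every $g\in L^2(\mu)$; since $\mathcal{X},\mathcal{Y}$ are bounded intervals and $p$ is bounded, all the integrands appearing below are bounded and hence lie in $L^2(\mu)$.

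Next I would rewrite the objective so as to isolate the single source of nonlinearity. Define the weakly continuous linear functionals $L(d)=\int y\,d(y,x)p(x)\,d\mu$, $N(d)=\int(1-d(y,x))\,d\mu$, $M(d)=\int x(1-d(y,x))\,d\mu$ and $K(d)=\int y(1-d(y,x))\,d\mu$. Then $x^{ND}=M(d)/N(d)$ whenever $N(d)>0$, and from (\ref{eq:p1})--(\ref{eq:p2}),
\[
\Pi(d)=L(d)+p\!\left(\tfrac{M(d)}{N(d)}\right)K(d).
\]
All of the difficulty is concentrated in the composite term $p(M(d)/N(d))K(d)$: the map $d\mapsto M(d)/N(d)$ is a ratio of linear functionals and is continuous only where the non-disclosure probability $N(d)$ is bounded away from zero.

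The key step is to show that $\Pi$ is weakly continuous (upper semicontinuity would already suffice) on all of $\mathcal{D}$, including the boundary $\{N(d)=0\}$ where $x^{ND}$ is fixed only by convention. On $\{N(d)>0\}$ this is immediate: if $d_n\rightharpoonup d$ with $N(d)>0$, then $N(d_n)\to N(d)>0$ is eventually bounded below, so $M(d_n)/N(d_n)\to M(d)/N(d)$ and, by continuity of $p$ and weak continuity of $K$, $p(M(d_n)/N(d_n))K(d_n)\to p(M(d)/N(d))K(d)$. At a point $d$ with $N(d)=0$, the argument is that the troublesome term vanishes in the limit: writing $C=\max(|y_{min}|,|y_{max}|)$, one has $|K(d')|\leqslant C\,N(d')$ for every $d'$ (since $0\leqslant 1-d'\leqslant 1$), while $0\leqslant p\leqslant 1$; hence $|p(M(d_n)/N(d_n))K(d_n)|\leqslant C\,N(d_n)\to C\,N(d)=0$. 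In particular, the arbitrary off-path choice of $x^{ND}$ is immaterial because it is multiplied by $K(d)=0$, so $\Pi(d_n)\to L(d)=\Pi(d)$. This establishes continuity of $\Pi$ on $\mathcal{D}$.

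With $\Pi$ weakly continuous on the weakly compact set $\mathcal{D}$, the Weierstrass theorem yields a maximizer $d^*\in\mathcal{D}$, and any $\mu$-measurable representative valued in $[0,1]$ is the desired optimal disclosure rule. I expect the main obstacle to be precisely the boundary behaviour at $N(d)=0$: the ratio defining $x^{ND}$ is genuinely discontinuous there and the convention fixing $x^{ND}$ off path looks like it could create a jump, so the crux is the observation that the profitability-weighted non-disclosure mass $K(d)$ is dominated by $N(d)$ and therefore annihilates the singular factor $p(x^{ND})$ in the limit. One could instead invoke the improvement argument of Theorem \ref{th:1} to restrict attention to threshold rules and reduce to a two-dimensional maximization over $(x^{ND},y^{ND})$, but the same boundary subtlety reappears, so I would prefer the self-contained functional-analytic argument above.
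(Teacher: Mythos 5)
Your argument is correct, and it takes a genuinely different route from the paper. The paper proves existence by discretizing the domain $\mathcal{Y}\times\mathcal{X}$ into $N\times N$ grids, showing that on each finite-dimensional class $\mathcal{D}^N$ the maximum is attained and coincides with the maximum over discretized threshold rules $\mathcal{T}^N$, observing that the continuum threshold class $\mathcal{T}$ is a compact two-parameter family on which the objective attains its supremum, and then passing to the limit $N\rightarrow\infty$ to conclude $\sup_{\mathcal{D}}\Pi=\sup_{\mathcal{T}}\Pi$. Your proof instead works directly on the full space of rules: weak compactness of the order interval $\{0\leqslant d\leqslant 1\}$ in $L^2(\mu)$, plus weak (sequential) continuity of $\Pi=L+p(M/N)\cdot K$, with the only delicate point being the full-disclosure boundary $N(d)=0$, which you dispatch via the domination $|K(d')|\leqslant C\,N(d')$ so that the conventionally defined $p(x^{ND})$ is annihilated in the limit. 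Each approach buys something: the paper's argument is packaged with the threshold characterization and delivers a maximizer that is manifestly a threshold rule, whereas yours is self-contained, does not rely on the improvement argument, and is arguably tighter on the analysis --- the paper's interchange steps $\lim_N\sup_{\mathcal{D}^N}\Pi=\sup_{\mathcal{D}}\Pi$ are asserted without proof and themselves require an approximation argument, and the paper never explicitly addresses the discontinuity of $x^{ND}$ at full disclosure, which you correctly identify as the crux. Two minor points worth making explicit if you write this up: the Weierstrass step uses weak \emph{sequential} compactness, which follows from Eberlein--\v{S}mulian (or from separability of $L^2(\mu)$ here), matching your sequential continuity argument; and the passage from an $L^2$ equivalence class back to a genuine measurable $[0,1]$-valued rule is harmless because $\Pi$ depends on $d$ only through integrals against $\mu$.
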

\begin{proof}
Let $\mathcal{D}$ be the set of all disclosure rules $d:\mathcal{Y}\times\mathcal{X}\rightarrow[0,1]$. Also define the following subset of $\mathcal{D}$, the set of threshold disclosure rules: $\mathcal{T}$ is the set of disclosure rules such that $d(x,y)\in\{0,1\}\text{ and }d(x,y)=1\Leftrightarrow (x-\bar{x})(y-\hat{y}(x))> 0$ for some $\bar{x}\in\mathcal{X}$ and $\hat{y}(x)$ satisfying, for some $\bar{y}\in\mathcal{Y}$
$$\bar{y}(x)=\bar{y}\left[\frac{p'(\bar{x})(\bar{x}-x)}{p(\bar{x})-p(x)}\right].$$
Further, define the following ``discretized version'' of $\mathcal{D}$: for each $N>0$, $d\in\mathcal{D}^N$ if $d\in\mathcal{D}$ and $d(x,y)=d(y',x')$ whenever $y_{min}+\frac{n^Y}{N}(y_{max}-y_{min})\leqslant y,y'<y_{min}+\frac{n^Y+1}{N}(y_{max}-y_{min})$ and $x_{min}+\frac{n^X}{N}(x_{max}-x_{min})\leqslant x,x'<x_{min}+\frac{n^X+1}{N}(x_{max}-x_{min})$ for some $n^Y,n^X\in\{0,...,N-1\}$. Note that the disclosure rules in $\mathcal{D}^N$ are fully defined by $N^2$ numbers in $[0,1]$.

And define also a discretized version of $\mathcal{T}$: for every $N>0$, $d\in\mathcal{T}^N$ if $d\in\mathcal{D}^N$ and there is some $\bar{x}\in\mathcal{X}$ and $\bar{y}\in\mathcal{Y}$ such that (i) $d(x,y)=1$ if $(x'-\bar{x})(y'-\hat{y}(x))> 0$ for every $x'$ and $y'$ with $y_{min}+\frac{n^Y}{N}(y_{max}-y_{min})\leqslant y,y'<y_{min}+\frac{n^Y+1}{N}(y_{max}-y_{min})$ and $x_{min}+\frac{n^X}{N}(x_{max}-x_{min})\leqslant x,x'<x_{min}+\frac{n^X+1}{N}(x_{max}-x_{min})$ for some $n^Y,n^X\in\{0,...,N-1\}$; and (ii) $d(x,y)=0$ if $(x'-\bar{x})(y'-\hat{y}(x))< 0$ for every $x'$ and $y'$ with $y_{min}+\frac{n^Y}{N}(y_{max}-y_{min})\leqslant y,y'<y_{min}+\frac{n^Y+1}{N}(y_{max}-y_{min})$ and $x_{min}+\frac{n^X}{N}(x_{max}-x_{min})\leqslant x,x'<x_{min}+\frac{n^X+1}{N}(x_{max}-x_{min})$ for some $n^Y,n^X\in\{0,...,N-1\}$.
\vskip.3cm
\noindent Now we use the following facts about these sets: 
\vskip.3cm

\noindent \textbf{Fact 1.} For every $N$, $\sup_{d\in\mathcal{D}^N}\Pi(d)=\sup_{d\in\mathcal{T}^N}\Pi(d)$.
\begin{proof}[Proof of Fact 1]
As previously noted, the disclosure rules in $\mathcal{D}^N$ are fully defined by $N^2$ numbers in $[0,1]$. This, along with the continuity of the objective, implies that there exists some $\hat{d}\in\mathcal{D}^N$ such that $\Pi(\hat{d})=\sup_{d\in\mathcal{D}^N}\Pi(d)$. But, with an argument analogous to the proof of Lemma \ref{lem:2}, we can show that if $d\neq\mathcal{T}^N$, then $\Pi(d)<\sup_{d\in\mathcal{D}^N}\Pi(d)$. And so it must be that $\hat{d}\in\mathcal{T}^N$, completing the proof.
\end{proof}

\noindent \textbf{Fact 2.} There exists some $d^*$ such that $\Pi(d^*)=\sup_{d\in\mathcal{T}}\Pi(d)$.
\begin{proof}[Proof of Fact 2]
Disclosure rules in $\mathcal{T}$ are fully defined by two numbers, $\bar{x}\in\mathcal{X}$ and $\bar{y}\in\mathcal{Y}$, whenever $(x-\bar{x})(y-\hat{y}(x))\neq0$. But, because the joint distribution of profitabilities and signal realizations have no mass points, the disclosure rule at the equality points do not affect the seller's expected payoff. And so, effectively, when choosing a disclosure rule in $\mathcal{T}$, the seller chooses only two numbers, from a compact set. This, along with continuity of the objective, implies the existence of some $d^*$ such that $\Pi(d^*)=\sup_{d\in\mathcal{T}}\Pi(d)$.
\end{proof}

Now we have $\lim_{N\rightarrow\infty}\sup_{\mathcal{D}^N}\Pi(d)=\sup_{\mathcal{D}}\Pi(d)$ and $\lim_{N\rightarrow\infty}\sup_{\mathcal{T}^N}\Pi(d)=\sup_{\mathcal{T}}\Pi(d)$. But, by Fact 1, $\sup_{d\in\mathcal{D}^N}\Pi(d)=\sup_{d\in\mathcal{T}^N}\Pi(d)$ for every $N$. And so $\sup_{d\in\mathcal{D}}\Pi(d)=\sup_{d\in\mathcal{T}}\Pi(d)$. But then, by Fact 2, there exists some $d^*$ such that $\Pi(d^*)=\sup_{d\in\mathcal{T}}\Pi(d)=\sup_{d\in\mathcal{D}}\Pi(d)$, which concludes the existence proof.
\end{proof}

\subsection*{Proof of Corollary \ref{prop1}}
If a disclosure rule conceals signal realizations with strictly positive probability, then it must be that $x^{ND}\in\text{int}\left(\mathcal{X}\right)$ and $y^{ND}\in\text{int}\left(\mathcal{Y}\right)$. Next, I argue that the fully revealing disclosure rule is not optimal.

Remember that the demand function is affine, so $p(x)=a+bx$. Then it must be that for any disclosure rule $d$, $\mathbb{E}[P(y,d)]=a\mathbb{E}(y)+b\mathbb{E}(x)$. (This is a straightforward consequence of the martingale property of beliefs, or ``Bayesian plausibility.'') 
Therefore the overall probability of sale is independent of the disclosure rule.

Now consider a full revelation disclosure rule $d^1$ and some other threshold disclosure rule $d$, defined by interior thresholds $\bar{x}$ and $\bar{y}$. It must be that, for $y<\bar{y}$, $P(y,d)<P(y,d^1)$ and, for $y>\bar{y}$, $P(y,d)>P(y,d^1)$, and so the covariance of sales and profitability is larger under $d$ than under $d^1$; and thus $d^1$ is not an optimal disclosure rule. (This last argument relies on the fact that the distribution of profitabilities and signal realizations has full support.)\qed

\subsection*{Proof of Lemma \ref{lem:1}}
Suppose the seller chooses disclosure rule $d$. Upon observing non-disclosure and profitability $y$ (under mandated transparency), the buyer's mean posterior is
$$x_y^{ND}=\frac{\int_\mathcal{X}[1-d(x,y)] xdF_{X|y}(x)}{\int_\mathcal{X}[1-d(x,y)] dF_{X|y}(x)}.$$
And so the probability of sale of object $y$ is
$$P(y,d)=\int_\mathcal{X}\left[d(x,y)p(x)+(1-d(x,y))p(x_y^{ND})\right]dF_{X|y}(x).$$ 
Note that the probability of sale of object $y$ is thus independent of the disclosure rule used for objects with profitability $y'\neq y$. And so the seller's problem is separable across profitability levels. Therefore maximizing $\Pi(d)$ over $d:\mathcal{Y}\times\mathcal{X}\rightarrow[0,1]$ is equivalent to maximizing, for each $y\in\mathcal{Y}$, $yP(y,d(\cdot,y))$, over $d(\cdot,y):\mathcal{X}\rightarrow[0,1]$.\qed

\subsection*{Proof of Proposition \ref{propp1}}
For any given $d(\cdot,y):\mathcal{X}\rightarrow[0,1]$, 
$$yP(y,d(\cdot,y))=\int_\mathcal{X}yp(x)dF^B_y(x,d(\cdot,y)),$$
where $F^B_y$ is the distribution of posterior means induced on the buyer when profitability is $y$ and the disclosure rule is $d(\cdot,y)$.
Suppose $p$ is everywhere strictly concave, and take two disclosure rules $d(\cdot,y)$ and $d'(\cdot,y)$, where $F^B_y(\cdot,d(\cdot,y))$ is a mean preserving spread of $F^B_y(\cdot,d'(\cdot,y))$. 

This immediately implies that, $\mathbb{E}\left[P(d(\cdot,y))\right]\leqslant \mathbb{E}\left[P(d'(\cdot,y))\right]$. Now let $d^0:\mathcal{X}\rightarrow[0,1]$ be the fully concealing disclosure rule -- that is, for all $x\in\mathcal{X}$, $d^0(x)=0$. And note that, for any $d(\cdot,y)$, $F^B_y(\cdot,d(\cdot,y))$ is a mean preserving spread of $F^B_y(\cdot,d^0)$, and so $d^0$ maximizes $\mathbb{E}\left[yP(y,d(\cdot,y))\right]$. By Lemma \ref{lem:1}, it is then a solution to the seller's problem.

An analogous argument shows that, if $p$ is everywhere strictly convex, then the fully disclosing rule $d^1$ maximizes $\mathbb{E}\left[yP(y,d(\cdot,y))\right]$, and $d^1$ is thus a solution to the seller's problem.\qed

\subsection*{Proof of Proposition \ref{pr:p3}}
When the buyer observes the object's profitability with probability $\tau\in[0,1]$, the seller's profit, as a function of the chosen disclosure rule $d$, is given by
\begin{align}
\Pi_\tau(d)=\int_{\mathcal{Y}}y\left[\tau P^1(y,d)+(1-\tau)P^0(y,d)\right]dF_Y(y),
\label{eq:l1}
\end{align}
$$\text{where }P^1(y,d)=\int_\mathcal{X}\left[d(x,y)p(x)+(1-d(x,y))p(x_y^{ND}))\right]dF_{X|y}(x),$$ 
$$\text{and }P^0(y,d)=\int_\mathcal{X}\left[d(x,y)p(x)+(1-d(x,y))p(x^{ND}))\right]dF_{X|y}(x).$$ 
Remember that $x^{ND}$ is as defined in (\ref{avg}) and $x_y^{ND}$ is defined in (\ref{eqq1}).

Suppose $p$ is everywhere strictly convex, and assume by contradiction that $d$ has strictly less disclosure than $d'$. A consequence is that, conditional on the profitability $y$ being disclosed to the observer, $d$ implies that the buyer is strictly less informed about the object's value than $d'$. And therefore 
\begin{equation}
\label{e1}\int_{\mathcal{Y}} P^1(y,d)dF_Y(y)<\int_{\mathcal{Y}} P^1(y,d')dF_Y(y).\end{equation}
 Consequently, it must be that 
 \begin{equation}
 \label{e2}\int_{\mathcal{Y}} P^0(y,d)dF_Y(y)>\int_{\mathcal{Y}} P^0(y,d')dF_Y(y),\end{equation} for otherwise $d$ cannot be the optimal policy under transparency $\tau$. Now using (\ref{e1}) and (\ref{e2}), and the fact that $\tau\geqslant\tau'$, it must be that if $\Pi_\tau(d)>\Pi_\tau(d')$, then $\Pi_{\tau'}(d)>\Pi_{\tau'}(d')$, which contradicts the optimality of $d'$ under policy $\tau'$. And thus we conclude that $d$ must have no less disclosure than $d'$. (An analogous argument implies that if $p$ is everywhere strictly concave, then $d$ must have no more disclosure than $d'$.)
 
Now suppose $1\geqslant\tau>\tau'=0$.
From (\ref{eq:l1}), we have:
\begin{align}
&\frac{\partial \Pi_\tau}{\partial d(x,y)}=\tau\left[y\left(p(x)-p(x_y^{ND})\right)-yp'(x_y^{ND})(x-x_y^{ND})\right]dF_{X|y}(x)dF_Y(y)\nonumber\\
&\hskip5pt+(1-\tau)\left[y\left(p(x)-p(x^{ND})\right)-y^{ND}p'(x^{ND})(x-x^{ND})\right]dF_{X|y}(x)dF_Y(y).\label{foc5}
\end{align}
Let $d_0\in\argmax_d\Pi_0(d)$ and $d_\tau\in\argmax_d\Pi_\tau(d)$ for some $\tau\in(0,1)$. Then it must be that almost surely $d_0(x,y),d_\tau(x,y)\in\{0,1\}$ and $d_0(x,y)=0$ ($d_\tau(x,y)=0$) if and only if $\frac{\partial \Pi_0(d_0)}{\partial d(x,y)}\leqslant0$ ($\frac{\partial \Pi_\tau(d_\tau)}{\partial d(x,y)}\leqslant0$).

Let $ND_0=\left\{(x,y)\in\mathcal{Y}\times\mathcal{X}:\frac{\partial \Pi_0(d_0)}{\partial d(x,y)}\leqslant0\right\}$ be the non-disclosure set of $d_0$, 
 and analogously define $ND_\tau=\left\{(x,y)\in\mathcal{Y}\times\mathcal{X}:\frac{\partial \Pi_\tau(d_\tau)}{\partial d(x,y)}\leqslant0\right\}$, the non-disclosure set of $d_\tau$.

Now suppose the demand function $p$ is strictly concave. Then the first term in (\ref{foc5}) is negative, for all $(x,y)$. And so, if $x^{ND}(d_0)=x^{ND}(d_\tau)$ and $y^{ND}(d_0)=y^{ND}(d_\tau)$, it must be that $ND_0\subset ND_\tau$. And consequently, disclosure rule $d_0$ discloses more signal realizations than disclosure rule $d_\tau$; and $d_0$ has strictly more disclosure than $d_\tau$. If instead $p$ is strictly convex, then  the first term in (\ref{foc5}) is positive, for all $(x,y)$. And the opposite conclusion holds: if $x^{ND}(d_0)=x^{ND}(d_\tau)$ and $y^{ND}(d_0)=y^{ND}(d_\tau)$, $ND_\tau\subset ND_0$; and therefore $d_0$ has strictly less disclosure than $d_\tau$. 
\qed

\subsection*{Proof of Proposition \ref{propp2}}
Suppose $p(x)=a+bx$, with $a,b\in\mathbb{R}$. 
Then, under full transparency, for any disclosure rule $d$,
$$yP(y,d)=\int_\mathcal{X}y[a+bx]dF^B_y(x,d_y),$$
where $d_y(\cdot)=d(y,\cdot)$ and $F^B_y$ is the distribution of posterior means induced on the buyer when profitability is $y$ and the disclosure rule is $d_y$. And so 
$$yP(y,d)=ya+yb\int_\mathcal{X}xdF^B_y(x,d_y)=ya+yb\mathbb{E}[x|y].$$
The second equality is due to the fact that, under any disclosure rule, $F^B_y$ has average equal to $\mathbb{E}[x|y]$, the underlying expected value of the object given that the profitability is $y$. ($F^B_y$ is a mean preserving spread of the degenerate distribution at the object's expected value, given the profitability that is observed by the buyer due to the mandated transparency.) Therefore, under full transparency, we have
\begin{equation}
\Pi^1(d)=\mathbb{E}\left[yP(y,d)\right]=a\mathbb{E}(y)+b\mathbb{E}\left[y\mathbb{E}(x|y)\right],\label{eq:j1}
\end{equation}
which is independent of the disclosure rule, proving part (a) of the proposition. (Denote by $\Pi^1$ the seller's value under full transparency; and by $\Pi^0$ the value when profitability is not revealed to the buyer.)

Now if $\tau\neq1$, then the seller's value is
$$
\Pi(d)=\tau\Pi^1(d)+(1-\tau)\Pi^0(d)=\tau\left\{a\mathbb{E}(y)+b\mathbb{E}\left[y\mathbb{E}(x|y)\right]\right\}+(1-\tau)\Pi^0(d).
$$
Because the first term is constant, it follows that the seller's optimal disclosure rule maximizes $\Pi^0(d)$, proving part (b) of the proposition.\qed

\subsection*{Proof of Proposition \ref{propn}}
Let $\Pi_\tau(d,\theta)$ be the expected payoff to the seller who has the signal of precision $\theta$, uses disclosure rule $d$, and has their motives disclosed to the buyer with probability $\tau$. And let $\bar{\Pi}_\tau(\theta)=\max_d\Pi_\tau(d,\theta)$.
\begin{lemma}
$\bar{\Pi}_0(\theta)$ is weakly increasing in $\theta$.
\label{lem:j1}
\end{lemma}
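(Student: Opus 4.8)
The plan is to show that a seller endowed with the more precise signal can exactly reproduce the payoff that the optimal disclosure rule attains under the less precise signal, so that $\bar{\Pi}_0(\theta')\ge\bar{\Pi}_0(\theta)$ whenever $\theta'>\theta$. Throughout I use the affine-demand assumption, writing $p(x)=a+bx$ with $b>0$ (strict monotonicity of $p$). Since $P(y,d)=\mathbb{E}[p(\hat{x})\mid y]$, the tower property turns the seller's objective (\ref{eq:p2}) into $\Pi_0(d,\theta)=a\,\mathbb{E}(y)+b\,\mathbb{E}[y\hat{x}]$, where $\hat{x}$ is the buyer's posterior mean induced by $d$. Hence the $\tau=0$ payoff depends on the disclosure outcome only through $\mathbb{E}[y\hat{x}]$ (equivalently, through $\mathrm{Cov}(y,\hat{x})$), and it suffices to construct, for the $\theta'$-signal, a feasible disclosure rule yielding the same value of $\mathbb{E}[y\hat{x}]$ as the optimal rule for $\theta$.

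Fix an optimal rule $d$ for precision $\theta$, with non-disclosure mean $x^{ND}$. By assumption (ii), for each $y$ the distribution $F_{X|y}(\cdot,\theta')$ is a mean-preserving spread of $F_{X|y}(\cdot,\theta)$, so by Strassen's theorem there is, for each $y$, a martingale coupling of a ``virtual'' coarse draw $\tilde{X}\sim F_{X|y}(\cdot,\theta)$ and the true fine draw $X'\sim F_{X|y}(\cdot,\theta')$ satisfying $\mathbb{E}[X'\mid\tilde{X},y]=\tilde{X}$. I would have the $\theta'$-seller, upon observing $(X',y)$, privately draw $\tilde{X}$ from the reverse conditional of this coupling and then disclose the true realization $X'$ with probability $d(\tilde{X},y)$. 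Because disclosure rules are permitted to randomize, this defines a feasible rule $d'(X',y)=\mathbb{E}[d(\tilde{X},y)\mid X',y]\in[0,1]$ on the $\theta'$-signal.

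The key step is that the martingale property makes $d'$ payoff-equivalent to $d$, even though the seller now discloses the finer realization rather than the coarse one. Conditioning on $(\tilde{X},y)$ and using $\mathbb{E}[X'\mid\tilde{X},y]=\tilde{X}$ gives, for the disclosed part,
\[
\mathbb{E}\!\left[y\,X'\,d(\tilde{X},y)\right]=\mathbb{E}\!\left[y\,\tilde{X}\,d(\tilde{X},y)\right],
\]
and the same computation yields $\mathbb{E}\!\left[X'(1-d(\tilde{X},y))\right]=\mathbb{E}\!\left[\tilde{X}(1-d(\tilde{X},y))\right]$ together with equal non-disclosure probabilities, so the non-disclosure mean induced by $d'$ again equals $x^{ND}$. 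Adding the disclosed and concealed contributions shows that $\mathbb{E}[y\hat{x}]$ under $d'$ equals $\mathbb{E}[y\hat{x}]$ under $d$; hence $\Pi_0(d',\theta')=\bar{\Pi}_0(\theta)$ and therefore $\bar{\Pi}_0(\theta')\ge\Pi_0(d',\theta')=\bar{\Pi}_0(\theta)$.

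I expect the only genuine obstacle to be the hard-evidence constraint: the seller cannot disclose a garbled value, only the actual realization, so the naive ``more information is Blackwell-better'' argument does not apply directly. The resolution is exactly the martingale coupling above, which preserves every payoff-relevant statistic (the disclosed covariance term and the non-disclosure mean) because the fine realizations average, conditional on $(\tilde{X},y)$, back to the coarse ones. A minor point to verify is measurability of the reverse kernel and the identities $\mathbb{E}(\tilde{X})=\mathbb{E}(X')=\mathbb{E}(x)$, both of which follow from the mean-preserving-spread structure and the common prior mean; note also that the constructed $d'$ is typically strictly randomized even when $d$ is a deterministic threshold rule, which is why working within the class of $[0,1]$-valued rules is essential.
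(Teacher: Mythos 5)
Your proposal is correct and follows essentially the same route as the paper: the paper's proof also reduces the $\tau=0$ payoff to the affine form and then asserts, for any rule $d$ under precision $\theta$, the existence of a rule $d'$ under $\theta'$ matching both $\int_\mathcal{X}d'(y,x)\,dF_{X|y}(x,\theta')=\int_\mathcal{X}d(y,x)\,dF_{X|y}(x,\theta)$ and the corresponding disclosed-mean integral for every $y$, which is exactly what your martingale-coupling construction delivers. The only difference is that you make the existence of the replicating rule explicit via Strassen's theorem, where the paper leaves it as an assertion.
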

\begin{proof}[Proof of Lemma \ref{lem:j1}]
Remember that $p(x)=a+bx$. Then for each $\theta$, and each $y$,
$$yP(y,d,\theta)=ay+by\left[\int_\mathcal{X}xd(x,y)dF_{X|y}(x,\theta)+\int_\mathcal{X}x^{ND}(1-d(x,y))dF_{X|y}(x,\theta)\right],$$
\begin{equation*}
\text{where }x^{ND}=\frac{\int_\mathcal{Y}\int_\mathcal{X}x\left(1-d(x,y)\right)dF_{X|y}(x,\theta)dY(y)}{\int_\mathcal{Y}\int_\mathcal{X}\left(1-d(x,y)\right)dF_{X|y}(x,\theta)dY(y)}.
\end{equation*}
Now take two precision levels $\theta'>\theta$, and remember that, for each $y$, $F_{X|y}(\cdot,\theta')$ is a mean-preserving spread of $F_{X|y}(\cdot,\theta)$. And so, for any disclosure rule $d$, there exists some disclosure rule $d'$ such that
$$\int_\mathcal{X}xd'(x,y)dF_{X|y}(x,\theta')=\int_\mathcal{X}xd(x,y)dF_{X|y}(x,\theta),$$
$$\text{and }\int_\mathcal{X}d'(x,y)dF_{X|y}(x,\theta')=\int_\mathcal{X}d(x,y)dF_{X|y}(x,\theta),$$
for every $y\in\mathcal{Y}$. And thus $yP(y,d',\theta')=yP(y,d,\theta)$, and $\Pi_0(d',\theta')=\Pi_0(d,\theta)$. Because this is true for any disclosure rule $d$, it must be that $\bar{\Pi}_0(\theta')\geqslant \bar{\Pi}_0(\theta)$.
\end{proof}
\begin{lemma}
\label{lem:j2}
$\bar{\Pi}_1(\theta)$ is independent of $\theta$.
\end{lemma}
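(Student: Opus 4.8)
The plan is to read off $\bar{\Pi}_1(\theta)$ directly from the computation already carried out in the proof of Proposition \ref{propp2}, and then to observe that the only place $\theta$ enters is through a conditional mean that assumption (ii) forces to be constant in $\theta$. First I would note that, since $\tau=1$ and $p(x)=a+bx$ is affine, the argument in the proof of Proposition \ref{propp2}(a) applies verbatim when the seller holds the precision-$\theta$ signal: for \emph{any} disclosure rule $d$,
\[
\Pi_1(d,\theta)=a\,\mathbb{E}(y)+b\,\mathbb{E}\!\left[y\,\mathbb{E}_\theta(x\mid y)\right],
\]
where $\mathbb{E}_\theta(x\mid y)$ denotes the mean of $F_{X\mid y}(\cdot,\theta)$. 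Because this expression does not involve $d$, maximizing over $d$ is vacuous, so
\[
\bar{\Pi}_1(\theta)=a\,\mathbb{E}(y)+b\,\mathbb{E}\!\left[y\,\mathbb{E}_\theta(x\mid y)\right].
\]

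The remaining step is to show that $\mathbb{E}_\theta(x\mid y)$ does not depend on $\theta$. This follows from assumption (ii): for $\theta'>\theta$, $F_{X\mid y}(\cdot,\theta')$ is a mean-preserving spread of $F_{X\mid y}(\cdot,\theta)$, so every member of the family $\{F_{X\mid y}(\cdot,\theta)\}_\theta$ shares a common mean. By the martingale property of beliefs, that common mean is exactly $\mathbb{E}(x\mid y)$, the true expected value of the object conditional on profitability $y$ -- a primitive of the model that carries no $\theta$-dependence. Substituting, I obtain $\bar{\Pi}_1(\theta)=a\,\mathbb{E}(y)+b\,\mathbb{E}\!\left[y\,\mathbb{E}(x\mid y)\right]$, which proves the lemma.

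There is essentially no serious obstacle here; the only point requiring care is to confirm that the value formula inherited from Proposition \ref{propp2} is insensitive to \emph{which} signal in the precision family generates $F_{X\mid y}$. This is immediate, since the derivation of that formula used only the affineness of $p$ and the fact that, under transparency, the buyer's posterior mean averages (conditional on $y$) to $\mathbb{E}(x\mid y)$ -- both of which hold for every $\theta$. The contrast with Lemma \ref{lem:j1} is instructive: when $\tau=0$ the seller can exploit the additional spread of a more precise signal to raise the covariance term in the payoff, whereas under full transparency that steering channel is shut down and only the $\theta$-invariant conditional mean survives.
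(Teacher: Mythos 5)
Your proposal is correct and follows essentially the same route as the paper: the paper's proof also invokes equation (\ref{eq:j1}) from Proposition \ref{propp2} to write $\bar{\Pi}_1(\theta)=a\mathbb{E}(y)+b\mathbb{E}\left[y\mathbb{E}(x|y,\theta)\right]$ and then notes that the signal's precision does not affect the underlying conditional expected value $\mathbb{E}(x|y)$. Your added justification via assumption (ii) (mean-preserving spreads share a common mean) simply makes explicit what the paper states in one line.
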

\begin{proof}
From equation (\ref{eq:j1}), we know that under full transparency, 
\begin{equation}\bar{\Pi}_1(\theta)=a\mathbb{E}(y)+b\mathbb{E}\left[y\mathbb{E}(x|y,\theta)\right]=a\mathbb{E}(y)+b\mathbb{E}\left[y\mathbb{E}(x|y)\right].
\label{eq:j2}
\end{equation}
The last equality is due to the fact that the signal's precision does not affect the underlying expected value of the object, for a given $y$.
\end{proof}
\noindent By Lemmas \ref{lem:j1} and \ref{lem:j2},  and the fact that $\Pi(d,\theta)=\tau\Pi^1(d,\theta)+(1-\tau)\Pi^0(d,\theta)$ for every disclosure rule $d$, we have
$$\bar{\Pi}_\tau(\theta)=\tau\bar{\Pi}_1(\theta)+(1-\tau)\bar{\Pi}_0(\theta),$$
where the first term is constant and the second term is weakly increasing in $\theta$.

Because $\bar{\Pi}_1(\theta)$ is constant and $c$ is strictly increasing in $\theta$, it must be that if $\tau=1$, the seller acquires $\theta=0$ (proving the first statement in the proposition).
Now take two transparency levels $\tau'>\tau$ and suppose $\theta$ and $\theta'$ are signal precisions optimally acquired by the seller under each transparency level. If $\theta=\theta'$, then the second statement in the proposition is true. Suppose instead that $\theta\neq\theta'$. Then, by the optimality of $\theta$ under $\tau$,
$$\tau\bar{\Pi}_1(\theta)+(1-\tau)\bar{\Pi}_0(\theta)-c(\theta)\geqslant \tau\bar{\Pi}_1(\theta')+(1-\tau)\bar{\Pi}_0(\theta')-c(\theta').$$
\begin{equation}
\Rightarrow (1-\tau)\bar{\Pi}_0(\theta)-c(\theta)\geqslant (1-\tau)\bar{\Pi}_0(\theta')-c(\theta').
\label{eq:jj1}
\end{equation}
And by the optimality of $\theta'$ under $\tau'$,
$$\tau'\bar{\Pi}_1(\theta')+(1-\tau')\bar{\Pi}_0(\theta')-c(\theta')\geqslant \tau'\bar{\Pi}_1(\theta)+(1-\tau')\bar{\Pi}_0(\theta)-c(\theta).$$
\begin{equation}
\Rightarrow(1-\tau')\bar{\Pi}_0(\theta')-c(\theta')\geqslant (1-\tau')\bar{\Pi}_0(\theta)-c(\theta).
\label{eq:jj2}
\end{equation}
Adding up the two inequalities, we have
$$(\tau'-\tau)\bar{\Pi}_0(\theta)\geqslant (\tau'-\tau)\bar{\Pi}_0(\theta')\Rightarrow \bar{\Pi}_0(\theta)\geqslant \bar{\Pi}_0(\theta').$$
Note that $\bar{\Pi}_1(\theta)=\bar{\Pi}_1(\theta')$ contradicts either (\ref{eq:jj1}) or (\ref{eq:jj2}), because $c$ is strictly increasing in $\theta$. And so it must be that  $\bar{\Pi}_0(\theta)>\bar{\Pi}_0(\theta')$, and so $\theta>\theta'$, completing the proof.\qed

\subsection*{Proof of Proposition \ref{propnn}}
Statement 2 in the proposition follows trivially from the fact that, under $\tau=1$, the seller acquires $\theta=0$, and the signal  $\theta=0$ is perfectly uninformative. 
\vskip.3cm
\noindent\textbf{Statement 1.} $F^B(\cdot,\delta)$ is not strictly Blackwell more informative than $F^B(\cdot,\delta')$.
\begin{proof}[Proof of Statement 1]
If $\tau=1$, then statement 1 is a consequence of statement 2. Assume instead that $\tau\neq1$.
Because $d$ is an optimal disclosure rule under $\tau$, it is a threshold disclosure rule (by Theorem \ref{th:1}), and so $F^B(\cdot,\delta)$ is defined as follows. There is some $\bar{x}\in\text{int}\left(\mathcal{X}\right)$ and $\bar{y}\in\text{int}\left(\mathcal{Y}\right)$ such that:

\noindent$\circ$ If $x<\bar{x}$, 
\begin{equation*}F^B(x,\theta,d)=\int_{[y_{min},\bar{y})}F_{X|y}(x,\theta)dF_Y(y).\end{equation*}
$\circ$ If $x\geqslant \bar{x}$,
\begin{align*}
F^B(x,\theta,d)&=\int_\mathcal{Y}F_{X|y}(x,\theta)dF_Y(y)+\int_{[y_{min},\bar{y})}(1-F_{X|y}(x,\theta))dF_Y(y)\\
&=F_Y(\bar{y})+\int_{[\bar{y},y_{max}]}F_{X|y}(x,\theta)dF_Y(y).\end{align*} 
The distribution $F^B(\cdot,\delta')$ can be expressed analogously, substituting $\theta$ with $\theta'$ and $\bar{x}$ and $\bar{y}$ with some potentially different $\bar{x}'\in\text{int}\left(\mathcal{X}\right)$ and $\bar{y}'\in\text{int}\left(\mathcal{Y}\right)$. Consider three cases:\footnote{The following arguments repeatedly use the definition of mean preserving spreads in Shaked and Shanthikumar (2007), equation (3.A.8).}

\underline{Case 1.} $\bar{x}=\bar{x}'$ and $\bar{y}=\bar{y}'$. In this case, the fact that for each $y$, $F_{X|y}(\cdot,\theta')$ is a mean-preserving spread of $F_{X|y}(\cdot,\theta)$ implies that $F^B(x,\delta')$ is a mean preserving spread of $F^B(x,\delta)$. (And so the former is Blackwell more informative.)

\underline{Case 2.} $\bar{y}\neq\bar{y}'$. Suppose first that $\bar{y}'>\bar{y}$. Then there is some $x<\min\{\bar{x},\bar{x}'\}$ such that
$$\int_{x_{min}}^{x}F^B(\hat{x},\delta')d\hat{x}=\int_{x_{min}}^{x}\int_{[y_{min},\bar{y}')}F_{X|y}(\hat{x},\theta')dF_Y(y)d\hat{x}$$
$$=\int_{y_{min}}^{\bar{y}'}\int_{x_{min}}^{x}F_{X|y}(\hat{x},\theta')d\hat{x}dF_Y(y)>\int_{y_{min}}^{\bar{y}}\int_{x_{min}}^{x}F_{X|y}(\hat{x},\theta')d\hat{x}dF_Y(y)$$
\begin{equation}\geqslant\int_{y_{min}}^{\bar{y}}\int_{x_{min}}^{x}F_{X|y}(\hat{x},\theta)d\hat{x}dF_Y(y)=\int_{x_{min}}^{x}F^B(\hat{x},\delta)d\hat{x}.
\label{eq:j3}
\end{equation}
The first inequality used the fact that $\bar{y}'>\bar{y}$ and that for each $y$, $F_{X|y}(\cdot,\theta')$ has full support over $\mathcal{X}(\theta')$ (and thus over $\mathcal{X}(\theta)\subseteq\mathcal{X}(\theta')$). The second inequality used the fact that for each $y$, $F_{X|y}(\cdot,\theta')$ is a mean-preserving spread of $F_{X|y}(\cdot,\theta)$.
But (\ref{eq:j3}) implies that $F^B(\hat{x},\delta)$ is not a mean preserving spread of $F^B(\hat{x},\delta')$.

An analogous argument can be used to show that, if $\bar{y}'<\bar{y}$, there is some $x>\max\{\bar{x},\bar{x}'\}$ such that $\int_{[x,x_{max})}(1-F^B(\hat{x},\delta'))d\hat{x}>\int_{[x,x_{max})}(1-F^B(\hat{x},\delta))d\hat{x}$, and so $F^B(\hat{x},delta)$ is not a mean preserving spread of $F^B(\hat{x},\delta')$.\footnote{This argument would use the equivalent definition of mean preserving spreads, in equation (3.A.7) of Shaked and Shanthikumar (2007).}

\underline{Case 3.} $\bar{y}\neq\bar{y}'$ and $\bar{x}\neq\bar{x}'$. Suppose first that $\bar{x}'<\bar{x}$. Then there exists some $x\in(\bar{x}',\bar{x})$ such that
$$\int_{x_{min}}^{x}F^B(\hat{x},\delta')d\hat{x}=\int_{x_{min}}^{x}\left[Y(\bar{y})+\int_{\bar{y}}^{y_{max}}F_{X|y}(\hat{x},\theta)dF_Y(y)\right]d\hat{x}$$
$$>\int_{x_{min}}^{x}\int_{y_{min}}^{\bar{y}}F_{X|y}(\hat{x},\theta')dF_Y(y)d\hat{x}=\int_{y_{min}}^{\bar{y}}\int_{x_{min}}^{x}F_{X|y}(\hat{x},\theta')d\hat{x}dF_Y(y)$$
$$\geqslant \int_{y_{min}}^{\bar{y}}\int_{x_{min}}^{x}F_{X|y}(\hat{x},\theta)d\hat{x}dF_Y(y)=\int_{x_{min}}^{x}F^B(\hat{x},\delta)d\hat{x}.$$
The first inequality used the fact that $\bar{y}'>\bar{y}$ and that for each $y$, $F_{X|y}(\cdot,\theta')$ has full support over $\mathcal{X}(\theta')$ (and thus over $\mathcal{X}(\theta)\subseteq\mathcal{X}(\theta')$). The second inequality used the fact that for each $y$, $F_{X|y}(\cdot,\theta')$ is a mean-preserving spread of $F_{X|y}(\cdot,\theta)$.
Again, this implies that $F^B(\hat{x},\delta)$ is not a mean preserving spread of $F^B(\hat{x},\delta')$. An analogous argument can be used to show that if $\bar{x}'>\bar{x}$, there is some $x\in(\bar{x},\bar{x}')$ such that $\int_{[x,x_{max})}(1-F^B(\hat{x},\delta'))d\hat{x}>\int_{[x,x_{max})}(1-F^B(\hat{x},\delta))d\hat{x}$ and so $F^B(\hat{x},\delta)$ is not a mean preserving spread of $F^B(\hat{x},\delta')$.
\end{proof}

\noindent\textbf{Statement 3.} If signals of precision $\theta$ and $\theta'$ are regular and symmetric, then $F^B(\cdot,\delta')$ is Blackwell more informative than $F^B(\cdot,\delta)$.
\begin{proof}[Proof of Statement 3]
If a signal of precision $\theta$ is regular, then if 
\begin{equation}
\mathbb{E}_\theta\left[x|(x-\hat{x})(y-\hat{y})<0\right]=\hat{x}\text{ and }\mathbb{E}_\theta\left[y|(x-\hat{x})(y-\hat{y})<0\right]=\hat{y},
\label{eq:j4}
\end{equation}
for some $\hat{x}\in\text{int}\left(\mathcal{X}\right)$ and $\hat{y}\in\text{int}\left(\mathcal{Y}\right)$, then the threshold disclosure rule with thresholds $\hat{x}$ and $\hat{y}$ is the unique optimal disclosure rule. (Corollary \ref{prop1} implies that any optimal disclosure rule satisfies (\ref{eq:j4}) and regularity implies that there is a unique such pair $(\hat{x},\hat{y})$.)

Now suppose the signal of precision $\theta$ is symmetric and note that
\begin{equation}\mathbb{E}_\theta\left[x|(x-\mathbb{E}(x))(y-\mathbb{E}(y))<0\right]=
\label{eq:j5}
\end{equation}
$$=\int_{y_{min}}^{\mathbb{E}(y)}\int_{\mathbb{E}(x)}^{x_{max}}xdF_{X|y}(x,\theta)dF_Y(y)+\int_{\mathbb{E}(y)}^{y_{max}}\int_{x_{min}}^{\mathbb{E}(x)}xdF_{X|y}(x,\theta)dF_Y(y)$$
$$=\int_{y_{min}}^{\mathbb{E}(y)}\int_{\mathbb{E}(x)}^{x_{max}}xdF_{X|y}(x,\theta)dF_Y(y)+\int_{y_{min}}^{\mathbb{E}(y)}\int_{x_{min}}^{\mathbb{E}(x)}xdF_{X|y}(x,\theta)dF_Y(y)$$
$$=\int_{y_{min}}^{\mathbb{E}(y)}\int_{x_{min}}^{x_{max}}xdF_{X|y}(x,\theta)dF_Y(y)=\mathbb{E}(x),$$
where $\mathbb{E}(x)$ and $\mathbb{E}(y)$ are the underlying unconditional expected value and profitability of the object, respectively.
The second and last equalities in (\ref{eq:j5}) use the fact that the signal is symmetric. Analogously, we can show that 
$$\mathbb{E}_\theta\left[y|(x-\mathbb{E}(x))(y-\mathbb{E}(y))<0\right]=\mathbb{E}(y).$$
And so the optimal disclosure rule given acquired signal $\theta$ must use thresholds $\bar{x}=\mathbb{E}(x)$ and $\bar{y}=\mathbb{E}(y)$. But the same is true for $\theta'$, because signal $\theta'$ is also regular and symmetric. And so, by Case 1 in the proof of Statement 1, it must be that $F^B(x,\delta')$ is a mean preserving spread of $F^B(x,\delta)$. (And so the former is Blackwell more informative.)
\end{proof}

\subsection*{Proof of Claim in Example \ref{sec:ex}}
In Example \ref{sec:ex}, I made the following claim:
\begin{claim}
The signal structure described by $Y=U[0,1]$ and $F_y(\cdot;\theta)=U\left[\frac{1-\theta}{2},\frac{1+\theta}{2}\right]$ for every $y$ is regular. 
 \end{claim}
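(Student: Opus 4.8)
The plan is to verify the two defining identities of regularity and, crucially, to show that the pair $(\hat{x},\hat{y})$ solving them is unique. Existence is immediate: since $x$ and $y$ are independent with $\mathbb{E}(x)=\mathbb{E}(y)=1/2$ and each signal $\theta$ is symmetric, the computation in the proof of Statement 3 of Proposition \ref{propnn} already shows that $(\hat{x},\hat{y})=(1/2,1/2)$ satisfies
$$\mathbb{E}_\theta[x\mid(x-\hat{x})(y-\hat{y})<0]=\hat{x},\qquad \mathbb{E}_\theta[y\mid(x-\hat{x})(y-\hat{y})<0]=\hat{y}.$$
The substantive content is therefore uniqueness, which I would establish in three steps.

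First I would restrict the search. Write $a=\tfrac{1-\theta}{2}$ and $b=\tfrac{1+\theta}{2}$, so that $x\sim U[a,b]$ and $y\sim U[0,1]$ are independent, with $a<1/2<b$ whenever $\theta>0$. If a candidate $\hat{x}\in\text{int}(\mathcal{X})$ satisfies $\hat{x}\leqslant a$ or $\hat{x}\geqslant b$, then $x-\hat{x}$ has a constant sign almost surely, so the conditioning event $\{(x-\hat{x})(y-\hat{y})<0\}$ reduces to a half-space in $y$ alone; by independence $\mathbb{E}_\theta[x\mid\cdots]=\mathbb{E}(x)=1/2$, which cannot equal $\hat{x}$. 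Hence every solution has $\hat{x}\in(a,b)$ and (by hypothesis) $\hat{y}\in(0,1)$.

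Next, on the region $\hat{x}\in(a,b)$, $\hat{y}\in(0,1)$, I would compute both conditional means in closed form, using that the event splits as $\{x>\hat{x},\,y<\hat{y}\}\cup\{x<\hat{x},\,y>\hat{y}\}$ together with independence and uniformity (so that, e.g., $\int_{\hat{x}}^{b}x\,dx=(b^2-\hat{x}^2)/2$ and $\int_{0}^{\hat{y}}y\,dy=\hat{y}^2/2$). After clearing denominators, the terms recombine into perfect squares, and the identity $\mathbb{E}_\theta[x\mid\cdots]=\hat{x}$ collapses to
$$\hat{y}\,(b-\hat{x})^2=(1-\hat{y})\,(\hat{x}-a)^2,$$
while the identity $\mathbb{E}_\theta[y\mid\cdots]=\hat{y}$ collapses to
$$(\hat{x}-a)\,(1-\hat{y})^2=(b-\hat{x})\,\hat{y}^2.$$
Carrying out this reduction of the two conditional-expectation equations to clean quadratics is the main computational obstacle; everything downstream is routine.

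Finally I would solve the system. Setting $u=\hat{x}-a>0$, $v=b-\hat{x}>0$, $s=\hat{y}$, $t=1-\hat{y}$, and $r=u/v$, the first equation reads $s/t=r^2$ and the second $(s/t)^2=r$; eliminating $s/t$ gives $r^4=r$, whose only positive real root is $r=1$. Then $u=v$ forces $\hat{x}=(a+b)/2=1/2$, and $s/t=1$ forces $\hat{y}=1/2$. Thus $(1/2,1/2)$ is the unique pair in $\text{int}(\mathcal{X}\times\mathcal{Y})$ satisfying the regularity conditions, which proves the claim. (For the degenerate signal $\theta=0$ the distribution of $x$ is a point mass and the conditioning event has probability zero, so regularity is understood to concern the informative signals $\theta>0$.)
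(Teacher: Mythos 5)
Your proof is correct and takes essentially the same route as the paper: reduce the two fixed-point conditions to a pair of polynomial identities (your $\hat{y}(b-\hat{x})^2=(1-\hat{y})(\hat{x}-a)^2$ and $(\hat{x}-a)(1-\hat{y})^2=(b-\hat{x})\hat{y}^2$ are exactly the paper's equations specialized from $\theta=1$ to general $\theta$) and eliminate to force $\hat{x}=\hat{y}=1/2$. The only difference is that you carry out the algebra for arbitrary $\theta>0$ and explicitly rule out candidates $\hat{x}$ outside the support $(a,b)$, whereas the paper writes the computation for $\theta=1$ and asserts the remaining cases follow the same steps; your version is slightly more complete on that point.
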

\begin{proof}
For cleaner algebra, I write the proof for the case with $\theta=1$, so that $F_y(\cdot;\theta)=U\left[0,1\right]$. The proof for $\theta\in[0,1)$ follows the same steps.

Remember that a signal of precision $\theta$ is \emph{regular} if there is a unique pair $(\hat{x},\hat{y})\in\text{int}\left(\mathcal{X}\times\mathcal{Y}\right)$ such that
\begin{equation}
\mathbb{E}_\theta\left[x|(x-\hat{x})(y-\hat{y})<0\right]=\hat{x}\text{ and }\mathbb{E}_\theta\left[y|(x-\hat{x})(y-\hat{y})<0\right]=\hat{y}.
\label{eq:jp1}
\end{equation}
Given the distributions in the example, (\ref{eq:jp1}) becomes
$$\frac{\hat{y}\hat{x}\frac{\hat{x}}{2}+(1-\hat{y})(1-\hat{x})\frac{1+\hat{x}}{2}}{\hat{x}\hat{y}+(1-\hat{y})(1-\hat{x})}=\hat{x}\text{ and }\frac{\hat{y}\hat{x}\frac{\hat{y}}{2}+(1-\hat{y})(1-\hat{x})\frac{1+\hat{y}}{2}}{\hat{x}\hat{y}+(1-\hat{y})(1-\hat{x})}=\hat{y}.$$
$$\Rightarrow\hat{y}\hat{x}\hat{x}=(1-\hat{y})(1-\hat{x})(1-\hat{x})\text{ and }\hat{y}\hat{y}\hat{x}=(1-\hat{y})(1-\hat{y})(1-\hat{x}).$$
Remember that we are looking for an interior solution, so $\hat{x},\hat{y}\in(0,1)$. Dividing the first equation by the second, we have
$$\frac{\hat{x}}{\hat{y}}=\frac{1-\hat{x}}{1-\hat{y}},$$
which has a unique solution at $\hat{x}=\hat{y}=1/2$.
\end{proof}

\subsection*{Proof of Proposition \ref{pr:5}}
Assume $y_{min}<0<y_{max}$. (The case where all profitabilities are positive or all profitabilities are negative is discussed in the main text.)

Suppose the buyer's mean posterior about the object's value after observing non-disclosure is $\hat{x}\in\mathcal{X}$. 
Then a seller with profitability $y>0$, upon observing a signal realization $x$, will choose to disclose it if and only if $x\geqslant\hat{x}$ (assuming that the seller discloses when indifferent). Conversely, a seller with profitability $y<0$ will disclose signal realization $x$ if and only if $x\leqslant \hat{x}$. A seller with profitability $y=0$ is always indifferent between disclosing and not disclosing, and we resolve this indifference with disclosure. 

And so Bayesian consistency requires that, in this equilibrium, 
$$\hat{x}=\mathbb{E}\left[x|(x-\hat{x})y<0\right],$$
where we note that the set $\{(x-\bar{x})y<0\}$ has positive probability, because the joint distribution of profitabilities and signal realizations has full support.

Regarding existence, I remark that there exists some $\hat{x}$ satisfying 
$$\hat{x}=\mathbb{E}\left[x|(x-\hat{x})y<0\right].$$
To see that, note that $\mathbb{E}\left[x|(x-\hat{x})y<0\right]\in(x_{min},x_{max})$ for both $\hat{x}=x_{min}$ and $\hat{x}=x_{max}$ (both because the joint distribution of profitabilities and signal realizations has full support). And so, by continuity of $\mathbb{E}\left[x|(x-\hat{x})y<0\right]$, there must be some solution to $\hat{x}=\mathbb{E}\left[x|(x-\hat{x})y<0\right]$. Given such $\hat{x}$, it is easy to see that $d^*(x,y)\in\{0,1\}$ and $d^*(x,y)=1\Leftrightarrow (x-\hat{x})y\geqslant 0$ defines an equilibrium disclosure strategy.\qed

\subsection*{Proof of Proposition \ref{pr:6}}
Under mandated transparency, the buyer forms mean posteriors about the object separately for each profitability level $y$. And so it follows from standard unravelling arguments that there can be no concealment in equilibrium.\qed

\end{document}